\documentclass[conference]{IEEEtran}
\IEEEoverridecommandlockouts
\usepackage{hyperref}
\usepackage{amsthm}
\usepackage{tikz}
\usetikzlibrary{positioning,arrows.meta}
\usepackage{cite}
\usepackage{amsmath,amssymb,amsfonts}
\usepackage{algorithm}
\usepackage{graphicx}
\usepackage{textcomp}
\usepackage{xcolor}
\usepackage{graphicx}
\usepackage{algpseudocode}
\usepackage{booktabs}
\usepackage{tabularx}
\numberwithin{equation}{section}
\numberwithin{figure}{section}
\numberwithin{table}{section}
\newtheorem{theorem}{Theorem}[section]
\newtheorem{lemma}{Lemma}[section]
\newtheorem{corollary}{Corollary}[section]

\newtheorem{definition}{Definition}[section]

\newtheorem{remark}{Remark}[section]

\def\BibTeX{{\rm B\kern-.05em{\sc i\kern-.025em b}\kern-.08em
    T\kern-.1667em\lower.7ex\hbox{E}\kern-.125emX}}
\begin{document}

\title{Target-Stratified Fair Range Summaries: Improved Fair $\varepsilon$-Nets
	and Geometric Hitting Sets\\
}

\author{
	\IEEEauthorblockN{
		Mingchao Zhou\IEEEauthorrefmark{1},
		Lei Zhao\IEEEauthorrefmark{1},
		Zhipeng Cai\IEEEauthorrefmark{2},
		Zhao Zhang\IEEEauthorrefmark{1}\textsuperscript{*}
	}
	
	\IEEEauthorblockA{
		\IEEEauthorrefmark{1}
		\textit{School of Mathematical Sciences, Zhejiang Normal University,}\\
		Jinhua 321004, Zhejiang, China\\
		Emails: zmc138575@gmail.com, zhaolei@zjnu.edu.cn, hxhzz@sina.com
	}
	
	\IEEEauthorblockA{
		\IEEEauthorrefmark{2}
		\textit{Department of Computer Science, Georgia State University,}\\
		Atlanta, GA 30303, USA\\
		Email: zcai@gsu.edu
	}
	
	\IEEEauthorblockA{
		ORCID:
		Mingchao Zhou:
		\href{https://orcid.org/0009-0007-3789-9872}{0009-0007-3789-9872};
		Lei Zhao:
		\href{https://orcid.org/0009-0005-9682-7284}{0009-0005-9682-7284}\\
		Zhipeng Cai:
		\href{https://orcid.org/0000-0001-6017-975X}{0000-0001-6017-975X};
		Zhao Zhang:
		\href{https://orcid.org/0000-0003-4191-7598}{0000-0003-4191-7598}
	}
	
	\thanks{\textsuperscript{*}Corresponding author: Zhao Zhang.}
}
\maketitle

\begin{abstract}
Compact summaries are a key tool for approximate query processing over large
datasets. For range-query workloads, an $\varepsilon$-net provides a small
summary that hits every sufficiently large range. However, classical
$\varepsilon$-nets only guarantee range validity and do not control the group
composition of the selected tuples. As a result, the summary may be
range-valid but poorly representative, which can propagate imbalance to
downstream query results.

Motivated by recent work on fair $\varepsilon$-nets and fair geometric hitting
sets \cite{dehghankar2025fair}, we study fairness-aware range summaries under prescribed target group
ratios.  Different from previous sample-and-repair approach, we propose a target-stratified sampling method.  For demographic parity (in which the ratio of fairness is determined by group proportion), our sample size is $O(A_{\varepsilon})$, coinciding with the standard 
$\varepsilon$-net bound, improving previous bound of $O\!\left(A_\varepsilon\log\frac{k}{\varphi}\right)$. For
custom-ratio targets (in which the ratio of fairness is determined by manually defined proportion), our sample size is $O(A_{\Gamma})$, where $\Gamma$ is a parameter measuring the gap between the customized ratio and the demographic parity; we prove that this dependence on $\Gamma$ is unavoidable, with a
worst-case lower bound of $\Omega(\Gamma/\varepsilon)$. Using our target-stratified sampling method, we could improve the previous approximation ratio for the fair geometric hitting set problem by a logarithmic factor, and making use of this result, we could in turn improve the size of custom-ratio fair $\varepsilon$-net. Experiments on real and synthetic datasets demonstrate that our method
constructs smaller fair summaries than existing approaches, scales to large
datasets and fine-grained group constraints, and improves downstream range
query processing.
\end{abstract}

\begin{IEEEkeywords}
Fair range-query summaries; Fair $\varepsilon$-nets; Target-stratified sampling; Geometric hitting set; Approximate query processing.
\end{IEEEkeywords}

\section{Introduction}

Compact summaries are a standard tool for approximate query processing over
large datasets \cite{cormode2011synopses,li2018approximate}. 
In approximate query processing, the goal is to return approximate answers
efficiently when exact query evaluation is too costly for interactive data
analysis \cite{li2018approximate,agarwal2013blinkdb}. 
A long line of work has developed synopsis-based approaches for this purpose.  For example,
Aqua \cite{acharya1999aqua} uses precomputed synopses to provide fast approximate answers to
aggregate queries in data-warehousing and OLAP applications, while samples, histograms, wavelets, and sketches form
standard synopsis families for massive data \cite{cormode2011synopses}. 
More recently, systems such as BlinkDB \cite{agarwal2013blinkdb} demonstrate that sampling-based
summaries can support ad-hoc, interactive SQL queries over large volumes of
data with explicit accuracy--latency tradeoffs.
These works motivate the use of compact summaries for reducing query
cost while retaining useful accuracy guarantees
\cite{chaudhuri2017approximate,li2018approximate}.

Range-query workloads form a basic class of analytical queries in spatial,
geometric, and multidimensional data. A range query asks for records lying in
a query region, such as an interval, rectangle, halfspace, disk, or another
predicate-defined range; range-searching data structures have been studied
extensively for efficiently reporting or counting such records
\cite{agarwal1999geometric}. For these workloads, a classical and
mathematically clean summary model is an $\varepsilon$-net. Given a range
space $(X,\mathcal R)$, an $\varepsilon$-net is a subset of the input that
intersects every range containing at least an $\varepsilon$ fraction of the
data. Thus, an $\varepsilon$-net preserves all sufficiently large query
regions. For range spaces of bounded VC dimension, the size of an
$\varepsilon$-net can be bounded in terms of the range-space complexity and
the accuracy parameter, rather than the input size \cite{HausslerWelzl1987}.
This makes $\varepsilon$-nets a natural abstraction for compact summaries.

However, range validity alone does not guarantee representation fairness
\cite{shahbazi2023representation}. In many data-management applications,
tuples are associated with demographic or semantic groups, such as gender,
race, region, or item category \cite{shetiya2022fairness}. A classical
$\varepsilon$-net may hit every heavy range while still overrepresenting some
groups and underrepresenting others. Figure~\ref{fig:intro-fairness} illustrates this gap between range validity and
representation fairness. Here, the blue rectangle denotes a heavy range, and
the green-circled points form a selected summary. The left panel shows that
a summary may be range-valid, in the sense that it hits the heavy range, while
still being unfair because it overrepresents the red group. The right panel shows
a summary that satisfies both goals simultaneously: it hits the heavy range
and respects the intended group proportions. Hence, fairness is not implied by
validity alone and must be enforced explicitly. 

\begin{figure}[htbp]
	\centering
	\begin{tikzpicture}[scale=0.9]
		\begin{scope}
			\node at (2,3.6) {\small \textbf{Valid but unfair summary}};
			\draw[step=0.8,gray!20,very thin] (0,0) grid (4,3.2);
			\draw[thick,rounded corners,blue!70] (0.45,0.45) rectangle (3.45,2.65);
			
			\foreach \x/\y in {
				0.45/0.75,0.8/1.1,1.3/0.6,1.8/1.4,2.2/0.95,2.8/1.25,3.25/0.8,3.45/1.7}
			\fill[red!75] (\x,\y) circle (2.2pt);
			
			\foreach \x/\y in {
				0.75/2.3,1.45/2.05,2.15/2.35,2.75/2.05,3.35/2.3}
			\fill[blue!75] (\x,\y) circle (2.2pt);
			
			\foreach \x/\y in {0.45/0.75,1.3/0.6,2.2/0.95,3.25/0.8}
			\draw[green!60!black,thick] (\x,\y) circle (4.2pt);
			
			\node[align=center] at (2,-0.45) {\scriptsize hits the heavy range,\\ \scriptsize but overrepresents red};
		\end{scope}
		
		\begin{scope}[xshift=5.5cm]
			\node at (2,3.6) {\small \textbf{Fair and valid summary}};
			\draw[step=0.8,gray!20,very thin] (0,0) grid (4,3.2);
			\draw[thick,rounded corners,blue!70] (0.45,0.45) rectangle (3.45,2.65);
			
			\foreach \x/\y in {
				0.45/0.75,0.8/1.1,1.3/0.6,1.8/1.4,2.2/0.95,2.8/1.25,3.25/0.8,3.45/1.7}
			\fill[red!75] (\x,\y) circle (2.2pt);
			
			\foreach \x/\y in {
				0.75/2.3,1.45/2.05,2.15/2.35,2.75/2.05,3.35/2.3}
			\fill[blue!75] (\x,\y) circle (2.2pt);
			
			\foreach \x/\y in {0.8/1.1,2.2/0.95}
			\draw[green!60!black,thick] (\x,\y) circle (4.2pt);
			\foreach \x/\y in {1.45/2.05,2.75/2.05}
			\draw[green!60!black,thick] (\x,\y) circle (4.2pt);
			
			\node[align=center] at (2,-0.45) {\scriptsize hits the heavy range\\ \scriptsize and respects group ratios};
		\end{scope}
	\end{tikzpicture}
	\caption{Range validity alone does not guarantee representation fairness. A classical $\varepsilon$-net may hit every heavy range while still overrepresenting one group.}
	\label{fig:intro-fairness}
\end{figure}

If such a summary is used to answer range queries, display representative
records, or select items for downstream analysis, the imbalance in the summary
may be inherited by the query output. This concern is consistent with recent
work viewing fairness as a data-management problem: unfairness may originate
from the underlying data and from data-processing operations such as selection,
repair, or exposure \cite{salimi2019data,salimi2019interventional,shetiya2022fairness}.
Recently, fair ranking and retrieval methods explicitly control group
representation in displayed top-$k$ results
\cite{zehlike2017fa,singh2018fairness,celis2018ranking}. These observations
motivate fairness-aware summaries that are simultaneously range-valid and
group-representative.

Fair $\varepsilon$-nets and fair geometric hitting sets were recently
introduced by Dehghankar et al.~\cite{dehghankar2025fair}. Their work adds
group-fairness constraints to classical geometric approximation problems,
including $\varepsilon$-nets, $\varepsilon$-samples, and geometric hitting
sets. They study both demographic parity, where the output preserves the input
group proportions, and custom-ratio fairness, where the target proportions are
specified externally. They also develop multiple algorithmic approaches  to construct fair $\varepsilon$-nets,
including a fast sampling-based method and a discrepancy-based method, and
show how fair geometric hitting set can be  solved using fair $\varepsilon$-nets.

The construction method in~\cite{dehghankar2025fair} could be called sample-and-repair: it first constructs a random sample and then repairs the group composition by adding points from
underrepresented groups.  The repair step increases the size of the $\varepsilon$-net by a logarithmic
factor, compared with the standard unfair $\varepsilon$-net. A natural question arises: 

\begin{quote}
	Is it possible to remove the additional logarithmic factor by some other method?
\end{quote}

\subsection{Contributions}

Our main contributions are as follows.

\begin{itemize}
\item  We propose a target-stratified sampling method for fair range summaries. Our method fixes the sampling size for each group according to the prescribed target ratios and then samples within each group. As a consequence, fairness is enforced by construction, and the remaining task is to certify the range-hitting property. We prove that by this method, the standard $\varepsilon$-net sample-size guarantee can be recovered for demographic parity.
	
\item For demographic-parity fair $\varepsilon$-nets,  we obtain bound
	$O\!\left(\frac{1}{\varepsilon}\left(d\log\frac{1}{\varepsilon}
	+\log\frac{1}{\varphi}\right)\right)$ in range spaces of VC dimension $d$,
	improving the previous sample-and-repair bound
	$O\!\left(\frac{1}{\varepsilon}\left(d\log\frac{1}{\varepsilon}
	+\log\frac{1}{\varphi}\right)\log\frac{k}{\varphi}\right)$
	\cite{dehghankar2025fair}, where $\varphi$ is a parameter controlling the failure probability.
	
	\item For custom-ratio fair $\varepsilon$-nets, we introduce a
	distribution-shift parameter $\Gamma$ to measure the gap between the customized ratio and the demographic parity. We obtain bound
	$O\!\left(\frac{\Gamma}{\varepsilon}\left(\log m+\log\frac{1}{\varphi}\right)\right)$
	for finite range families of size $m$, and bound
	$O\!\left(\frac{\Gamma}{\varepsilon}\left(d\log\frac{\Gamma}{\varepsilon}
	+\log\frac{1}{\varphi}\right)\right)$ for range spaces of VC dimension $d$.
	We also prove that the leading term $\Gamma/\varepsilon$ is unavoidable
	for  custom-ratio fairness.
	
 \item 
 We improve the approximation ratio for Fair Geometric Hitting Set
(FGHS) from
$O\!\left(L_{\mathrm{FGHS}}\log\frac{k}{\varphi}\right)$ obtained
in~\cite{dehghankar2025fair} to $O(L_{\mathrm{FGHS}})$, where
$L_{\mathrm{FGHS}}=d\log \mathrm{OPT}_{\mathrm{FGHS}}+\log(1/\varphi)$,
$\mathrm{OPT}_{\mathrm{FGHS}}$ is the optimum size of the FGHS instance.
Interestingly, we could reduce custom-ratio fair
$\varepsilon$-nets to FGHS and obtain an $O(L_{\mathrm{CR}})$ approximation
ratio, where
$L_{\mathrm{CR}}=d\log \mathrm{OPT}_{\mathrm{CR}}+\log(1/\varphi)$ and
$\mathrm{OPT}_{\mathrm{CR}}$ is the optimum size of a custom-ratio fair
$\varepsilon$-net. This improves the previous bound of
$O\!\left(L_{\mathrm{CR}}\log\frac{k}{\varphi}\right)$ \cite{dehghankar2025fair}.

\item We conduct experiments on real and synthetic datasets.
The results show that target-stratified sampling produces smaller fair
summaries than existing methods, scales to large datasets and fine-grained
group partitions, and follows the predicted dependence on $\Gamma$.
We also evaluate our rounding method for FGHS and show that the resulting
fair summaries can improve downstream range-query filtering.
\end{itemize}

A detailed comparison between our results and the framework of
\cite{dehghankar2025fair} is provided in Table~\ref{tab:main-comparison}.

\begin{table*}[t]
	\centering
	\caption{Comparison with the framework of~\cite{dehghankar2025fair}.}
	\label{tab:main-comparison}
	\footnotesize
	\renewcommand{\arraystretch}{1.22}
	\setlength{\tabcolsep}{3.5pt}
	\begin{tabular}{lcccc}
		\toprule
		Setting
		& Previous size
		& Our size
		& Previous time
		& Our time \\
		\midrule
		
		DP fair $\varepsilon$-net, sampling
		&
		$O\!\left(A_\varepsilon\log\frac{k}{\varphi}\right)$
		&
		$O(A_\varepsilon)$
		&
		$O(n)$
		&
		$O(n)$
		\\
		
		DP fair $\varepsilon$-net, discrepancy
		&
		$O(D_\varepsilon)$
		&
		$O(A_\varepsilon)$
		&
		$O(nm\log n)$
		&
		$O(n)$
		\\
		
		DP fair $\varepsilon$-net, sketch-and-merge
		&
		$O(D_\varepsilon)$
		&
		$O(A_\varepsilon)$
		&
		$\displaystyle
		O\!\left(
		n\cdot
		\frac{d^{3d}}{\varepsilon^{2d}}
		\log^d\frac{d}{\varepsilon}
		\right)$
		&
		$O(n)$
		\\
		
		CR fair $\varepsilon$-net, sampling
		&
		--
		&
		$O(A_\Gamma)$
		&
		--
		&
		$O(n)$
		\\
		
		CR fair $\varepsilon$-net, sampling lower bound
		&
		--
		&
		$\Omega\!\left(\frac{\Gamma}{\varepsilon}\right)$
		&
		N/A
		&
		N/A
		\\
		
		Fair Geometric Hitting Set
		&
		$O\!\left(
		OPT_{\mathrm{FGHS}}L_{\mathrm{FGHS}}
		\log\frac{k}{\varphi}
		\right)$
		&
		$O\!\left(
		OPT_{\mathrm{FGHS}}L_{\mathrm{FGHS}}
		\right)$
		&
		$T_{\mathrm{LP}}(m)$
		&
		$T_{\mathrm{LP}}(m)$
		\\
		
		CR fair $\varepsilon$-net via FGHS
		&
		$O\!\left(
		OPT_{\mathrm{CR}}L_{\mathrm{CR}}
		\log\frac{k}{\varphi}
		\right)$
		&
		$O\!\left(
		OPT_{\mathrm{CR}}L_{\mathrm{CR}}
		\right)$
		&
		$T_{\mathrm{LP}}(m_\varepsilon)$
		&
		$T_{\mathrm{LP}}(m_\varepsilon)$
		\\
		
		\bottomrule
	\end{tabular}
	
	\vspace{1mm}
	\begin{minipage}{0.98\textwidth}
		\footnotesize
		Here
		$A_\varepsilon=
		\frac1\varepsilon
		\left(d\log\frac1\varepsilon+\log\frac1\varphi\right)$,
		$A_\Gamma=
		\frac{\Gamma}{\varepsilon}
		\left(d\log\frac{\Gamma}{\varepsilon}+\log\frac1\varphi\right)$,
		$D_\varepsilon=
		\frac{d}{\varepsilon}\log\frac{d}{\varepsilon}$,
		$L_{\mathrm{FGHS}}=
		d\log OPT_{\mathrm{FGHS}}+\log\frac1\varphi$, and
		$L_{\mathrm{CR}}=
		d\log OPT_{\mathrm{CR}}+\log\frac1\varphi$.
		Also, $m=|\mathcal R|$, $m_\varepsilon=|\mathcal R_\varepsilon|$, and
		$T_{\mathrm{LP}}(m)=O(nm+M(n,m+k))$.
		The lower-bound row is not an algorithmic result; it shows that the
		leading $\Gamma/\varepsilon$ dependence is unavoidable for direct
		custom-ratio constructions.
	The discrepancy and sketch-and-merge rows are included to position our result
	within the previous framework. Their bounds are based on discrepancy-halving
	constructions and are stated under standard halving assumptions, such as
	power-of-two input and color-class sizes and nonzero target counts for all
	groups. Our contribution is a direct linear-time sampling rule, requiring only
	target-count feasibility.
	\end{minipage}
\end{table*}

\subsection{Related Work}
\label{sec:related}

Classical $\varepsilon$-nets provide small representative subsets for range
spaces of bounded VC dimension. Haussler and Welzl~\cite{HausslerWelzl1987}
established the classical sampling result for $\varepsilon$-nets.
Br\"onnimann and Goodrich~\cite{bronnimann1994almost} further showed how
$\varepsilon$-net constructions can be used to obtain approximation algorithms
for set cover and hitting set in finite-VC systems. These results provide the classical geometric and algorithmic foundations for the problems studied in this paper.

Fairness has also been studied in several related data-selection and
summarization problems. Celis et al.~\cite{celis2018fair} studied fair and
diverse data summarization using determinantal point processes, where fairness
controls the representation of different groups in the selected summary.
Huang et al.~\cite{huang2019coresets} developed coresets for clustering under
fairness constraints. In the database setting, Shetiya et
al.~\cite{shetiya2022fairness} studied fairness-aware range queries, where the
goal is to find a fair range that is close to a user-specified query.
These works consider different objectives from ours: our goal is to construct
a reusable fair summary that intersects every sufficiently large range.
Fairness has also been incorporated into covering and hitting problems.
Bandyapadhyay et al.~\cite{bandyapadhyay2021fair} studied fair covering and
hitting problems in which group-specific coverage requirements must be
satisfied. Inamdar et al.~\cite{inamdar2025fixed} studied a Fair Hitting Set
problem in which the selected hitting set is required not to contain too many
elements of each group. Dehghankar et al.~\cite{dehghankar}
introduced Fair Set Cover, where the selected sets are required to satisfy
demographic parity. These formulations differ from ours, which requires the
selected points themselves to satisfy prescribed group ratios while hitting
all sufficiently large geometric ranges.

The work most closely related to ours is that of Dehghankar, Sintos, and
Asudeh~\cite{dehghankar2025fair}, who introduced fair $\varepsilon$-nets,
fair $\varepsilon$-samples, and Fair Geometric Hitting Set under demographic
parity (DP) and custom-ratio (CR) fairness. Their sampling-based method first
draws a random sample from the whole dataset and then adds points from
individual groups to restore the prescribed group ratios. This leads to an
additional $\log(k/\phi)$ factor in the high-probability sample-size bound.
They also develop discrepancy-based constructions for fair $\varepsilon$-nets
and an LP-based approach for Fair Geometric Hitting Set.

\subsection{Organization}

The rest of the paper is organized as follows.
Section~\ref{sec:preliminaries} introduces the notation and definitions used
throughout the paper.
Section~\ref{sec:target-stratified-fair-eps-net} presents the target-stratified
sampling framework for fair $\varepsilon$-nets.
In Section~\ref{Target-Stratified Sampling}, we describe the sampling rule.
Section~\ref{Heavy Range} establishes the connection between
$\varepsilon$-heavy ranges under the uniform distribution and heavy ranges under
the target-stratified distribution.
Section~\ref{sec:finite-range-family} gives the guarantee for finite range
families.
Section~\ref{VC-Bounded } proves the stratified VC $\varepsilon$-net guarantee
for samples with fixed color counts.
Section~\ref{Consequences for DP} derives consequences for demographic
parity and custom-ratio fairness, including the dependence on the
distribution-shift parameter $\Gamma$.
Section~\ref{sc improve fair ge hit} develops an improved LP-rounding approximation
for Fair Geometric Hitting Set.
Section~\ref{from fghs to custom} reduces the size of custom-ratio fair $\varepsilon$-net using fair geometric hitting set. 
Section~\ref{sec:experiments} reports the experimental evaluation, and
Section~\ref{sec:conclusion} concludes the paper.

\section{Preliminaries}
\label{sec:preliminaries}

We introduce the notation and definitions used throughout the paper.
For an integer $k$, let $[k]=\{1,\ldots,k\}$. 

\begin{definition}[Range space~\cite{HausslerWelzl1987,Matousek2002}]
	A \emph{range space} is a pair $(X,\mathcal R)$, where $X$ is a finite
	ground set and $\mathcal R\subseteq 2^X$ is a family of subsets of $X$,
	called ranges. We write $n=|X|$.
\end{definition}

In geometric applications, the ground set $X$ consists of points, and the
ranges in $\mathcal R$ are induced by geometric objects such as rectangles,
disks, or halfspaces. In data-management applications, points in $X$ may
represent database tuples, and a range $R\in\mathcal R$ may represent the
answer set of a range query.

\begin{definition}[$\varepsilon$-net~\cite{HausslerWelzl1987,Matousek2002}]
	A set $S\subseteq X$ is an $\varepsilon$-net for $(X,\mathcal R)$ if
	\[
	\forall R\in\mathcal R \ \mbox{with}\ |R|\ge \varepsilon |X|
	\quad\Longrightarrow\quad
	S\cap R\ne\emptyset.
	\]
	More generally, let $q$ be a probability distribution on $X$ and define
	\[
	q(R)=\sum_{p\in R}q(p).
	\]
	A set $S\subseteq X$ is an $\alpha$-net with respect to $q$ if
	\[
	\forall R\in\mathcal R \ \mbox{with}\ q(R)\ge \alpha
	\quad\Longrightarrow\quad
	S\cap R\ne\emptyset.
	\]
\end{definition}

The second definition is the distributional version of an $\varepsilon$-net.
The unweighted definition is recovered by taking $q$ to be the uniform
distribution on $X$.

\begin{definition}[Trace and VC dimension~\cite{VapnikChervonenkis1971,Matousek2002}]
	For a subset $Y\subseteq X$, the \emph{trace} of $\mathcal R$ on $Y$ is
	\[
	\mathcal R|_Y
	=
	\{R\cap Y:\ R\in\mathcal R\}.
	\]
	The VC dimension of $(X,\mathcal R)$ is the largest integer $d$ such that
	there exists a subset $Y\subseteq X$ with $|Y|=d$ and
	\[
	\mathcal R|_Y=2^Y.
	\]
\end{definition}

The following result is known as Sauer--Shelah lemma~\cite{Sauer1972,Shelah1972}: if
$(X,\mathcal R)$ has VC dimension $d$, then for every finite set
$Y\subseteq X$,
\[
|\mathcal R|_Y|
\le
\left(\frac{e|Y|}{d}\right)^d .
\]
This lemma bounds the number of distinct intersections induced by
$\mathcal R$ on a finite sample.

\begin{theorem}[Standard VC $\varepsilon$-net theorem~\cite{VapnikChervonenkis1971,HausslerWelzl1987}]
	\label{thm:classical-vc-eps-net}
	Let $(X,\mathcal R)$ be a range space with VC dimension $d$. For any
	$\alpha\in(0,1)$ and $\varphi\in(0,1)$, a multiset $S$ obtained by taking
	\[
	O\left(
	\frac{1}{\alpha}
	\left(
	d\log\frac{1}{\alpha}
	+
	\log\frac{1}{\varphi}
	\right)
	\right)
	\]
	independent samples from $X$ uniformly at random (with replacement),
	is an $\alpha$-net with probability at least $1-\varphi$.
\end{theorem}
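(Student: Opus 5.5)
The plan is the classical double-sampling (symmetrization) argument of Haussler and Welzl, combined with the Sauer--Shelah lemma stated above. Set $s=\frac{C}{\alpha}\left(d\log\frac1\alpha+\log\frac1\varphi\right)$ for a suitable absolute constant $C$, and let $S$ be $s$ i.i.d.\ uniform draws from $X$. Since the statement is about the distributional $\alpha$-net with respect to the uniform measure $q(R)=|R|/|X|$, we only need to control ranges with $q(R)\ge\alpha$. Let $E_1$ be the bad event that some such range $R$ has $S\cap R=\emptyset$. The difficulty is that $\mathcal R$ may be large, so a direct union bound over ranges is not available. We will reduce to a union bound over the trace on a finite sample of size $2s$.

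\textbf{Step 1 (ghost sample).} Draw a second independent sample $T$ of $s$ uniform points. Define $E_2$ as the event that there is a range $R$ with $q(R)\ge\alpha$, $S\cap R=\emptyset$, and $|T\cap R|\ge \alpha s/2$ (counting multiplicity). Condition on $E_1$ and fix a witnessing range $R$, which depends only on $S$. Then $|T\cap R|$ is binomial with mean at least $\alpha s\ge 8$ once $C$ is large. By Chebyshev (or a Chernoff bound) it is at least $\alpha s/2$ with probability at least $1/2$. Hence $\Pr[E_2]\ge \tfrac12\Pr[E_1]$, and it suffices to show $\Pr[E_2]\le\varphi/2$.

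\textbf{Step 2 (random split of a fixed $2s$-multiset).} Condition on the combined multiset $Z=S\cup T$ of $2s$ points. By exchangeability, $(S,T)$ is a uniformly random split of $Z$ into two halves. For a fixed trace $R\cap Z$ containing $r\ge \alpha s/2$ points of $Z$, the probability that all of them land in $T$ is at most $\binom{2s-r}{s}/\binom{2s}{s}\le 2^{-r}\le 2^{-\alpha s/2}$. The event $E_2$ depends on $R$ only through $R\cap Z$. By the Sauer--Shelah lemma the number of distinct traces on the at most $2s$ distinct points of $Z$ is at most $(2es/d)^d$. A union bound then gives
\[
\Pr[E_2]\le \left(\frac{2es}{d}\right)^d 2^{-\alpha s/2}.
\]

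\textbf{Step 3 (choosing $s$; the main obstacle).} We need $d\log\frac{2es}{d}+\log\frac2\varphi\le \frac{\alpha s}{2}\ln 2$. The only delicate point is that $s$ appears on both sides through $\log s$. With $s=\frac{C}{\alpha}(d\log\frac1\alpha+\log\frac1\varphi)$, we have $\log\frac{s}{d}\le \log\frac1\alpha+\log C+\log\bigl(\log\frac1\alpha+\frac1d\log\frac1\varphi\bigr)$. The last term is at most $\log\frac1\alpha+\frac{1}{d}\log\frac1\varphi$ up to constants. So $d\log\frac{2es}{d}=O\bigl(d\log\frac1\alpha+\log\frac1\varphi\bigr)+O(d\log C)$, which the right-hand side $\Theta\bigl(C(d\log\frac1\alpha+\log\frac1\varphi)\bigr)$ dominates for $C$ a large enough absolute constant. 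This gives $\Pr[E_1]\le 2\Pr[E_2]\le\varphi$, as required.

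Multiplicities cause no trouble: both the split argument and Sauer--Shelah are applied to the underlying set of distinct points, and the trace count only decreases.
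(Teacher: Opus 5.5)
Your proposal is correct. It is the same double-sampling argument the paper gives for Lemma~\ref{lem:stratified-vc-eps-net}, which reduces to this theorem when $k=1$ and $q$ is uniform (the paper only cites the theorem itself): a ghost sample with $\Pr[\mathcal B]\le 2\Pr[\mathcal B']$, a per-trace split bound $\binom{2s-r}{s}/\binom{2s}{s}\le 2^{-r}$, and a Sauer--Shelah union bound over $(2es/d)^d$ traces; note that the split is over the $2s$ draws counted with multiplicity, not over distinct points as your closing remark says.

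One caveat, which the paper's proof shares: $\alpha s\ge 8$ in Step~1 and absorbing the $d\log(2eC)$ and $\log 2$ terms in Step~3 need $d\ge 1$ and $\log\frac1\alpha$ bounded below, so you should state the usual convention $\alpha\le\frac12$. As $\alpha\to 1$ the displayed bound stops growing with $d$, so some such restriction is needed for the statement itself, not just for your proof.
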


We now introduce the fairness constraints. The ground set is partitioned into
$k$ color classes:
\[
X=X_1\cup X_2\cup\cdots\cup X_k .
\]
Let
\[
\mu_c=\frac{|X_c|}{|X|}
\]
be the input proportion of color class $X_c$. A target ratio vector is a vector
\[
\tau=(\tau_1,\ldots,\tau_k),
\qquad
\tau_c\ge 0,
\qquad
\sum_{c=1}^k \tau_c=1.
\]

\begin{definition}[Fair $\varepsilon$-net~\cite{dehghankar2025fair}]
	A set $S\subseteq X$ is a \emph{fair $\varepsilon$-net} with target ratio vector 
	$\tau$ if $S$ is an $\varepsilon$-net and
	\[
	|S\cap X_c|
	=
	\tau_c |S|,
	\qquad
	\forall c\in[k].
	\]
\end{definition}

Two choices of $\tau$ will be used \cite{dehghankar2025fair}. Under \emph{demographic parity}, the target ratios are equal to the input
proportions:
\[
\tau_c=\mu_c,
\qquad
\forall c\in[k].
\]
Under \emph{custom-ratio fairness}, the target vector
$\tau$ is specified independently of the input proportions.

Fair $\varepsilon$-nets and related fair geometric hitting set problem
were studied by Dehghankar et al.~\cite{dehghankar2025fair}. Their
sampling-based algorithm first draws a random sample and then adds points to
restore the prescribed color ratios. In contrast, our target-stratified
sampling rule fixes the color counts before sampling.

To measure the bias of custom-ratio from the uniform distribution on
$X$, define
\[
\Gamma
=
\max_{c:\tau_c>0}
\frac{\mu_c}{\tau_c}.
\]
If there exists a color class $X_c$ with $\mu_c>0$ and $\tau_c=0$, then we set
$\Gamma=+\infty$. In this case, the target ratios forbid selecting points from
$X_c$. Consequently, if an $\varepsilon$-heavy range is contained in such a
color class, then no set satisfying the target ratios can be an $\varepsilon$-net.

Throughout the paper, we adopt the same feasibility convention as
\cite{dehghankar2025fair}. First, we assume the following support-compatibility
condition:
\[
\mu_c>0 \quad \Longrightarrow \quad \tau_c>0,
\qquad \forall c\in[k].
\]
Under this convention, $\Gamma<+\infty$.
This convention is necessary for obtaining a nontrivial guarantee under exact
target-ratio constraints. In addition, whenever a sample size $\lambda$ is used,
we require the prescribed target counts
\[
\lambda_c:=\tau_c\lambda
\]
to be integral and to satisfy
\[
\lambda_c\le |X_c|,\qquad \forall c\in[k].
\]
This convention is necessary for enforcing the target
ratios exactly, and will not lose much if it is not satisfied  and corresponding values are rounded to nearest integers.

\begin{definition}[Fair Geometric Hitting Set~\cite{dehghankar2025fair}]
	Given a range space $(X,\mathcal R)$ and a target ratio vector  $\tau$, the
	\emph{Fair Geometric Hitting Set} problem asks for a minimum-size set
	$H\subseteq X$ such that
	\[
	H\cap R\ne\emptyset,
	\qquad
	\forall R\in\mathcal R,
	\]
	and
	\[
	|H\cap X_c|
	=
	\tau_c |H|,
	\qquad
	\forall c\in[k].
	\]
	We denote its optimum value by $OPT_{\mathrm{FGHS}}$.
\end{definition}

The following  Chernoff lower-tail bound
\cite{mitzenmacher2017probability} will be used.  Let $Z$ be a sum of independent Bernoulli
random variables with mean $\mathbb E[Z]$. Then, for any $\delta\in(0,1)$,
\[
\Pr\left[
Z<(1-\delta)\mathbb E[Z]
\right]
\le
\exp\left(
-\frac{\delta^2\mathbb E[Z]}{2}
\right).
\]

\section{Target-Stratified Fair $\varepsilon$-Nets}
\label{sec:target-stratified-fair-eps-net}

We present a target-stratified sampling framework. 

\subsection{Target-Stratified Sampling}
\label{Target-Stratified Sampling}

\begin{algorithm}[htbp]
	\caption{\textsc{Target-Stratified-Sampling}}
	\label{alg:target-stratified-sampling}
	\begin{minipage}{0.95\linewidth}
		\textbf{Input:} range space $(X,\mathcal R)$, color partition
		$X=X_1\cup\cdots\cup X_k$, target ratio vector
		$\tau=(\tau_1,\ldots,\tau_k)$, and sample size $\lambda$.
		
		\textbf{Output:} a fair sample $S\subseteq X$.
		
		\begin{enumerate}
			\item For each color $c\in[k]$, set $\lambda_c=\tau_c\lambda$.
			\item Sample $\lambda_c$ points from $X_c$ uniformly with replacement,
			and let $M_c$ be the resulting multiset.
			\item Let $S_c$ be the set of distinct points in $M_c$.
			If $|S_c|<\lambda_c$, then add $\lambda_c-|S_c|$ arbitrary points from $X_c\setminus S_c$.
			\item Return $S=\bigcup_{c=1}^k S_c$.
		\end{enumerate}
	\end{minipage}
\end{algorithm}

The analysis is applied to the sampled multiset $M=\bigcup_c M_c$. The final
post-processing only removes duplicate copies and adds points within the same
color class. Hence it restores the prescribed color counts and preserves every
range already hit by $M$.
By our construction,
\[
|S|=\sum_{c=1}^k\lambda_c=\lambda,
\qquad
\frac{|S\cap X_c|}{|S|}
=
\frac{\lambda_c}{\lambda}
=
\tau_c,
\qquad
\forall c\in[k].
\]
Thus the target-ratio constraints hold deterministically. It remains to prove
that $S$ hits every $\varepsilon$-heavy range.

\subsection{Heavy Ranges under the Target-Stratified Distribution}
\label{Heavy Range}
Given target ratio vector $\tau$, define the target-stratified distribution $q$ on
$X$ as follows. First choose a color class $X_c$ with probability $\tau_c$, and
then choose a point uniformly from $X_c$. Equivalently,
\[
q(p)=\frac{\tau_c}{|X_c|},
\qquad
\forall p\in X_c .
\]
For any range $R\in\mathcal R$,
\[
q(R)
=
\sum_{c=1}^k
\tau_c\frac{|R\cap X_c|}{|X_c|}.
\]
The quantity $q(R)$ is the probability that one draw a point from the
target-stratified distribution that falls in $R$.

The next lemma is a bridge between $\varepsilon$-heavy ranges and the
target-stratified sampling distribution.

\begin{lemma}
	\label{lem:heavy-under-q}
	If $R\in\mathcal R$ satisfies
	\[
	|R|\ge \varepsilon |X|,
	\]
	then
	\[
	q(R)\ge \frac{\varepsilon}{\Gamma}.
	\]
\end{lemma}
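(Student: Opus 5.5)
The plan is a pointwise comparison between $q$ and the uniform distribution on $X$, followed by summing over the range $R$. The uniform distribution gives each point $p\in X_c$ mass $1/|X|$, while $q$ gives it $\tau_c/|X_c|$. Their ratio is
\[
\frac{1/|X|}{\tau_c/|X_c|}=\frac{|X_c|}{\tau_c|X|}=\frac{\mu_c}{\tau_c}\le\Gamma,
\]
provided $\tau_c>0$. The first step is to justify this for every color class that actually contains points. If $X_c\neq\emptyset$, then $\mu_c>0$, so the support-compatibility convention gives $\tau_c>0$ and the ratio is bounded by $\Gamma$ by definition. Empty classes contribute no points to $R$ and can be ignored.

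Rearranging, we get $q(p)\ge \frac{1}{\Gamma|X|}$ for every $p\in X$. Summing over $p\in R$ gives
\[
q(R)\ge \frac{|R|}{\Gamma |X|}\ge \frac{\varepsilon}{\Gamma},
\]
using the hypothesis $|R|\ge\varepsilon|X|$.

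The same computation can also be written color class by color class:
\[
q(R)=\sum_c \tau_c\frac{|R\cap X_c|}{|X_c|}=\sum_c \frac{\tau_c}{\mu_c}\cdot\frac{|R\cap X_c|}{|X|},
\]
with each coefficient $\tau_c/\mu_c\ge 1/\Gamma$ whenever $\mu_c>0$.

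There is no real obstacle here. The only point needing care is handling color classes with $\mu_c=0$ or $\tau_c=0$, and the feasibility convention already excludes the problematic case.
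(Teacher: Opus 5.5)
Your proof is correct and is essentially the paper's argument. The paper rewrites $q(R)=\sum_c \frac{\tau_c}{\mu_c}\frac{|R\cap X_c|}{|X|}$, bounds each coefficient below by $1/\Gamma$, and sums, which is exactly your class-by-class display; your pointwise bound $q(p)\ge 1/(\Gamma|X|)$ is the same inequality applied one point at a time. You are also slightly more careful than the paper: it asserts $\tau_c/\mu_c\ge 1/\Gamma$ ``for every $c\in[k]$'', which is undefined when $\mu_c=0$, whereas you note that empty classes contribute nothing to $R$.
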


\begin{proof}
	Since $|X_c|=\mu_c|X|$, we have
	\[
	q(R)
	=
	\sum_{c=1}^k
	\tau_c\frac{|R\cap X_c|}{|X_c|}
	=
	\sum_{c=1}^k
	\frac{\tau_c}{\mu_c}
	\frac{|R\cap X_c|}{|X|}.
	\]
	By the definition of $\Gamma$, $\tau_c/\mu_c\ge 1/\Gamma$ for every
	$c\in[k]$. Therefore,
	\[
	q(R)
	\ge
	\frac1\Gamma
	\sum_{c=1}^k
	\frac{|R\cap X_c|}{|X|}
	=
	\frac{|R|}{\Gamma |X|}
	\ge
	\frac{\varepsilon}{\Gamma}.
	\]
The lemma  is proved.
\end{proof}

\subsection{Finite Range Families}
\label{sec:finite-range-family}

We first prove the guarantee for finite range families. Suppose $|\mathcal R|=m$.

\begin{theorem}[Finite range $\varepsilon$-net]
	\label{thm:finite-range-stratified}
	If the sample size
	\[
	\lambda
	\ge
	\frac{\Gamma}{\varepsilon}
	\left(
	\ln m+\ln\frac1\varphi
	\right),
	\]
	then Algorithm~\ref{alg:target-stratified-sampling} returns a fair $\varepsilon$-net with probability
	at least $1-\varphi$.
\end{theorem}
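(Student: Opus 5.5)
Fairness of the output $S$ holds deterministically by construction, as shown right after Algorithm~\ref{alg:target-stratified-sampling}. Moreover, any range hit by the multiset $M=\bigcup_c M_c$ is also hit by $S$. So it suffices to show that, with probability at least $1-\varphi$, $M$ hits every range $R$ with $|R|\ge\varepsilon|X|$. The plan is to bound the miss probability of each fixed heavy range and then take a union bound over the at most $m$ heavy ranges.

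Fix a heavy range $R$. The $\lambda$ draws of $M$ are independent. Exactly $\lambda_c=\tau_c\lambda$ of them are uniform on $X_c$, and each of those misses $R$ with probability $1-|R\cap X_c|/|X_c|$. Using $1-x\le e^{-x}$,
\[
\Pr[M\cap R=\emptyset]
=\prod_{c=1}^k\left(1-\frac{|R\cap X_c|}{|X_c|}\right)^{\tau_c\lambda}
\le \exp\!\left(-\lambda\sum_{c=1}^k\tau_c\frac{|R\cap X_c|}{|X_c|}\right)
= e^{-\lambda q(R)}.
\]
Colors with $\mu_c=0$ contribute nothing, since then $R\cap X_c=\emptyset$. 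By Lemma~\ref{lem:heavy-under-q}, $q(R)\ge\varepsilon/\Gamma$. Hence the miss probability is at most $\exp(-\lambda\varepsilon/\Gamma)\le \varphi/m$ under the stated bound on $\lambda$.

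Summing over the at most $m$ heavy ranges, the probability that $M$ misses some heavy range is at most $\varphi$. This gives the theorem.

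The only delicate point is that the draws are not i.i.d. from $q$: the per-color counts are fixed rather than multinomial. The product computation above handles this directly, and the fixed counts are exactly what makes the exponent equal $\lambda q(R)$ with no Chernoff loss. Hence I do not expect a genuine obstacle, beyond checking the reduction from $S$ to $M$ and the degenerate colors.
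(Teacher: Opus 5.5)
Your proposal is correct and matches the paper's proof essentially step for step. You reduce from $S$ to the multiset $M$, bound the miss probability of a fixed range by $\prod_c\bigl(1-|R\cap X_c|/|X_c|\bigr)^{\lambda_c}\le e^{-\lambda q(R)}$, use Lemma~\ref{lem:heavy-under-q} to get $e^{-\lambda\varepsilon/\Gamma}\le\varphi/m$, and finish with a union bound over the $m$ ranges.
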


\begin{proof}
	The target-ratio constraints hold by our construction. It remains to prove that
	the output set $S$ hits every $\varepsilon$-heavy range. 	
    Note that $S$ is obtained from $M=\bigcup_{c=1}^k M_c$ (the multiset of sampled points) by removing duplicate samples. If $M$ hits a range then $S$ also hits it. Hence it suffices
	to bound the probability that $M$ misses a range.
	
	Fix a range $R\in\mathcal R$ and write
	\[
	\rho_c(R)=\frac{|R\cap X_c|}{|X_c|}.
	\]
	For color $c$, the probability that all $\lambda_c$ sampled points avoid
	$R$ is $(1-\rho_c(R))^{\lambda_c}$. Since the samples are independent across
	color classes, we have
	\[
	\begin{aligned}
		\Pr[M\cap R=\emptyset]
		&=
		\prod_{c=1}^k
		(1-\rho_c(R))^{\lambda_c}  \\
		&\le
		\exp\left(
		-\sum_{c=1}^k \lambda_c\rho_c(R)
		\right) \\
		&=
		\exp\left(
		-\lambda
		\sum_{c=1}^k
		\tau_c\frac{|R\cap X_c|}{|X_c|}
		\right) \\
		&=
		\exp(-\lambda q(R)).
	\end{aligned}
	\]
	If $R$ is $\varepsilon$-heavy, Lemma~\ref{lem:heavy-under-q} gives
	$q(R)\ge\varepsilon/\Gamma$. Thus
	\[
	\Pr[M\cap R=\emptyset]
	\le
	\exp\left(-\lambda\frac{\varepsilon}{\Gamma}\right)
	\le
	\frac{\varphi}{m}.
	\]
	A union bound over all ranges in $\mathcal R$ shows that the probability
	that some $\varepsilon$-heavy range is missed is at most $\varphi$. The theorem is proved.
\end{proof}

\subsection{VC-Bounded Range Spaces}
\label{VC-Bounded }

The next lemma extends the standard VC $\varepsilon$-net guarantee to
stratified sampling with fixed color counts. We state it for a general
distribution $q$ with $q(X_c)=\tau_c$, which includes uniform sampling within
colors as a special case.

\begin{lemma}[Stratified VC $\varepsilon$-net]
	\label{lem:stratified-vc-eps-net}
	Let $(X,\mathcal R)$ be a range space with VC dimension $d\ge 1$, where
	$X=X_1\cup\cdots\cup X_k$. Let $q$ be a distribution on $X$ such that $q(X_c)=\tau_c,~ \forall c\in[k].$
	Let $M$ be the multiset obtained by sampling, for each color $c$ with
	$\tau_c>0$, $\lambda_c$ points independently from $X_c$ according to the
	distribution $\Pr[p]=\frac{q(p)}{\tau_c}$ for $p\in X_c.$
	Then there exists a constant $C>0$ such that, for any
	$\alpha,\varphi\in(0,1)$, if
	\[
	\lambda
	\ge
	\frac{C}{\alpha}
	\left(
	d\log\frac1\alpha+\log\frac1\varphi
	\right),
	\]
	then, with probability at least $1-\varphi$, $M$ hits every range
	$R\in\mathcal R$ with $q(R)\ge\alpha$.
\end{lemma}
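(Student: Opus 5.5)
The plan is to adapt the classical double-sampling (``ghost sample'') proof of Theorem~\ref{thm:classical-vc-eps-net} due to Haussler and Welzl. The one change is that the symmetrization step only exchanges points \emph{within} a color class, so the fixed color counts $\lambda_c$ are preserved. Throughout, $\lambda=\sum_c\lambda_c$ with $\lambda_c=\tau_c\lambda$, and $\mathcal R_\alpha=\{R\in\mathcal R: q(R)\ge\alpha\}$.

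Let $E_1$ be the event that some $R\in\mathcal R_\alpha$ satisfies $M\cap R=\emptyset$. Draw an independent ghost multiset $T=\bigcup_c T_c$ with the same law as $M$: $\lambda_c$ points from $X_c$ according to $q(\cdot)/\tau_c$. Let $E_2$ be the event that some $R\in\mathcal R_\alpha$ satisfies both $M\cap R=\emptyset$ and $|T\cap R|\ge \alpha\lambda/2$.

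\textbf{Step 1: $\Pr[E_2]\ge \tfrac12\Pr[E_1]$.}
\begin{itemize}
\item Condition on $M$ with $E_1$ holding, and fix a witness range $R$.
\item $|T\cap R|$ is a sum of $\lambda$ independent Bernoulli variables. They are not identically distributed: a draw from color $c$ succeeds with probability $q(R\cap X_c)/\tau_c$.
\item Its mean is therefore $\sum_c\lambda_c\, q(R\cap X_c)/\tau_c=\lambda q(R)\ge\alpha\lambda$.
\item The Chernoff lower tail stated in the preliminaries, with $\delta=1/2$, gives $\Pr[|T\cap R|<\alpha\lambda/2]\le e^{-\alpha\lambda/8}\le 1/2$ once $\alpha\lambda\ge 8$.
\item The condition $\alpha\lambda\ge 8$ is implied by the hypothesis for $C$ large. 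It is important here that Chernoff does not require identically distributed summands.
\end{itemize}

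\textbf{Step 2: bound $\Pr[E_2]$ by a swapping argument.}
\begin{itemize}
\item For each color $c$, pair the $i$-th draw of $M_c$ with the $i$-th draw of $T_c$.
\item These two draws are i.i.d., so independently swapping each pair with probability $1/2$ leaves the joint law of $(M,T)$ unchanged.
\item Hence we may condition on the combined multiset $M\cup T$ together with its pair structure, and take probability only over the swaps.
\item By the Sauer--Shelah lemma, ranges induce at most $(2e\lambda/d)^d$ distinct traces on the $2\lambda$ points of $M\cup T$.
\item Fix a trace $R'$ containing $t\ge\alpha\lambda/2$ of the points.
 \begin{itemize}
 \item If some pair has both points in $R'$, then $M$ cannot miss $R'$.
 \item Otherwise the $t$ points lie in $t$ distinct pairs, and all of them must land in $T$. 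This happens with probability $2^{-t}\le 2^{-\alpha\lambda/2}$.
 \end{itemize}
\item A union bound over traces gives
\[
\Pr[E_1]\le 2\Pr[E_2]\le 2\left(\frac{2e\lambda}{d}\right)^d 2^{-\alpha\lambda/2}.
\]
\end{itemize}

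\textbf{Step 3: the routine calculation.} It remains to check that $\lambda\ge \frac{C}{\alpha}(d\log\frac1\alpha+\log\frac1\varphi)$ makes this at most $\varphi$. This is the same computation as in the classical proof. Use $d\log(2e\lambda/d)\le \alpha\lambda/8$ for large $C$, via the standard inequality $\lambda\ge 2a\log a\Rightarrow \lambda\ge a\log\lambda$. The remaining factor $2^{-\alpha\lambda/4}$ is then below $\varphi/2$.

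The main obstacle is Step 2: justifying exchangeability when the samples are not identically distributed across colors. Global random permutations of $M\cup T$, as in the textbook proof, are not measure-preserving here. The pairwise within-color swap is the fix I would use. One must check that the event ``$M$ misses $R'$'' depends only on which pair member goes to $M$, and that the Sauer--Shelah count is taken on the fixed multiset $M\cup T$. Both are routine once the conditioning is set up.
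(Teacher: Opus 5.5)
Your proposal is correct and follows the paper's proof essentially step for step: an independent stratified ghost sample, the Chernoff lower tail for non-identically distributed Bernoullis giving the factor $2$, a symmetrization that exchanges draws only within a color class, and a Sauer--Shelah union bound over traces on $M\cup T$ yielding $2(2e\lambda/d)^d2^{-\alpha\lambda/2}$. The one difference is that the paper uses a random balanced split of the $2\lambda_c$ draws of each color, for which its stated $\Pr[E_R]=2^{-|A_R|}$ holds only as an inequality, while your pairwise swaps give independent fair coins and hence exactly $2^{-t}$; like the paper, you tacitly need $\alpha$ bounded away from $1$ (e.g.\ $\alpha\le 1/2$) for $\alpha\lambda\ge 8$ and $d\log(2e\lambda/d)\le\alpha\lambda/8$ to follow from the hypothesis, since the statement as written fails when $\alpha$ is close to $1$.
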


\begin{proof}
	Define the bad event
	\[
	\mathcal B
	=
	\left\{
	\exists R\in\mathcal R:\ q(R)\ge\alpha
	\ \text{and}\
	M\cap R=\emptyset
	\right\}.
	\]
	We prove that $\Pr[\mathcal B]\le\varphi$.
	
Draw an independent ghost multiset $T$ using the same stratified sampling rule
as $M$. For a range
$R\in\mathcal R$, let
\[
Z_R:=|T\cap R|.
\]

{\bf Claim 1.} If sample size $\lambda\ge 8\ln 2/\alpha$, then for any range $R\in\mathcal R$ with $q(R)\geq \alpha$, $\Pr\left[Z_R\ge \frac{\alpha\lambda}{2}\right]\ge
	\frac12$.
	
	[Proof of Claim 1] For any $R\in\mathcal R$ with $q(R)\geq \alpha$,

\[
\begin{aligned}
	\mathbb E[Z_R]
	&=
	\sum_{p\in R}\Pr[p\in T]\\
	&=
	\sum_{c:\tau_c>0}\sum_{p\in R\cap X_c}\Pr[p\in T]\\
	&=
	\sum_{c:\tau_c>0}
	\sum_{p\in R\cap X_c}\lambda_c\frac{q(p)}{\tau_c}\\
	&=
	\lambda
	\sum_{c:\tau_c>0}q(R\cap X_c)\\
	&=
	\lambda q(R)
	\ge
	\alpha\lambda.
\end{aligned}
\]
By the Chernoff bound and the assumption $\lambda\ge 8\ln 2/\alpha$,
\[
\Pr\left[
Z_R<\frac{\alpha\lambda}{2}
\right]
\le
\exp\left(-\frac{\alpha\lambda}{8}\right)\le \frac12.
\]
Then Claim 1 is proved.
	
Define a sub-event
\[
\mathcal B'
=
\left\{
\exists R\in\mathcal R:\ 
q(R)\ge\alpha,\ 
M\cap R=\emptyset,\ 
{ Z_R \ge \frac{\alpha\lambda}{2}}
\right\}.
\]

{\bf Claim 2.} If $\lambda\ge 8\ln 2/\alpha$, then $\Pr[\mathcal B]\le 2\Pr[\mathcal B'].$

[Proof of Claim 2]  If $\mathcal B$ occurs, then there exists a range $R_M$ with $q(R_M)\ge \alpha$ such
that $M\cap R_M=\emptyset$. By Claim 1, $\Pr\left[Z_{R_M}\ge \frac{\alpha\lambda}{2}\right]\ge
\frac12$. The existence of such an $R_M$ implies that event $\mathcal B'$ occurs. So
\[
\Pr[\mathcal B'\mid \mathcal B]\geq \frac12.
\]
It follows that
\[
\Pr[\mathcal B']=\Pr[\mathcal B]\Pr[\mathcal B'\mid \mathcal B]\geq\frac12\Pr[\mathcal B].
\]
Claim 2 is proved.

Claim 3. If $\lambda\ge 8\ln 2/\alpha$, then $\Pr[\mathcal B']\le\left(\frac{2e\lambda}{d}\right)^d2^{-\alpha\lambda/2}$.

[Proof of Claim 3] Note that $M,T$ can be viewed as two multisets obtained in the following way: for each color class $X_c$, draw $2\lambda_c$ points (with replacement) according to distribution $q/\tau_c$; then randomly assign $\lambda_c$ of them to $M$, the remaining $\lambda_c$ points belong to $T$. So, every sampled point belongs to $T$ with probability $1/2$.

For a range $R$, denote by
\[
A_R=R\cap (M\cup T)
\]
the multiset of points sampled from $R$. Note that $A_R$ is a random multi-set depending on $M$ and $T$. 

Let $E_R$ be the event that $M\cap R=\emptyset$ and $Z_{R}\geq \frac{\alpha\lambda}{2}$. If $E_R$ happens, then $A_R\subseteq T$, and thus $|A_R|=Z_R$. So,
\begin{equation}\label{eq0603-1}
\Pr[E_R]=2^{-|A_R|}=2^{-Z_R}\leq 2^{-\alpha\lambda/2}.
\end{equation}

By the Sauer--Shelah lemma \cite{Sauer1972,Shelah1972}, the number of distinct
traces 
\begin{equation}\label{eq0603-2}
|\mathcal R|_{M\cup T}|\leq \left(\frac{2e\lambda}{d}\right)^d.
\end{equation}

Combining \eqref{eq0603-1} and \eqref{eq0603-2}, taking a union bound over all possible traces in $\mathcal R|_{M\cup T}$, we have
\[
	\Pr[\mathcal B']
	\le
	\left(\frac{2e\lambda}{d}\right)^d
	2^{-\alpha\lambda/2}.
\]
Claim 3 is proved.

Combining Claim 2 and Claim 3,
	\[
	\Pr[\mathcal B]
	\le
	2
	\left(\frac{2e\lambda}{d}\right)^d
	2^{-\alpha\lambda/2}.
	\]
	Choosing $C$ sufficiently large makes the right hand side at most $\varphi$.
	This proves the lemma.
\end{proof}
\begin{theorem}[VC-bounded range spaces]
	\label{thm:vc-stratified}
	Suppose $(X,\mathcal R)$ has VC dimension $d$. Then there exists a
	constant $C>0$ such that for sample size
	\[
	\lambda
	\ge
	\frac{C\Gamma}{\varepsilon}
	\left(
	d\log\frac{\Gamma}{\varepsilon}
	+
	\log\frac1\varphi
	\right),
	\]
	Algorithm~\ref{alg:target-stratified-sampling} returns, with probability
	at least $1-\varphi$, a fair $\varepsilon$-net.
\end{theorem}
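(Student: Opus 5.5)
The proof combines Lemma~\ref{lem:heavy-under-q} with Lemma~\ref{lem:stratified-vc-eps-net}, applied to the target-stratified distribution $q$ of Section~\ref{Heavy Range} with threshold $\alpha=\varepsilon/\Gamma$. As in Theorem~\ref{thm:finite-range-stratified}, the fairness constraints hold deterministically by construction of Algorithm~\ref{alg:target-stratified-sampling}. Since $S$ is obtained from the sampled multiset $M=\bigcup_c M_c$ by deleting duplicates and adding same-color points, every range hit by $M$ is also hit by $S$. It therefore suffices to show that, with probability at least $1-\varphi$, $M$ hits every $\varepsilon$-heavy range.

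First, I check that the sampling in Algorithm~\ref{alg:target-stratified-sampling} is exactly the scheme of Lemma~\ref{lem:stratified-vc-eps-net} for $q(p)=\tau_c/|X_c|$, $p\in X_c$. This $q$ satisfies $q(X_c)=\tau_c$. For each $c$ with $\tau_c>0$, the conditional distribution $q(p)/\tau_c=1/|X_c|$ is uniform on $X_c$, and exactly $\lambda_c=\tau_c\lambda$ points are drawn independently with replacement. Colors with $\tau_c=0$ contribute no samples; by the support-compatibility convention these have $\mu_c=0$, so they are empty and irrelevant. If $d=0$, the range space has at most one distinct range, so the finite-range Theorem~\ref{thm:finite-range-stratified} with $m=1$ already applies. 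Hence I assume $d\ge1$.

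Next, I set $\alpha=\varepsilon/\Gamma$. I need $\alpha\in(0,1)$. Since $\sum_c\mu_c=\sum_c\tau_c=1$, some $c$ has $\mu_c\ge\tau_c>0$, so $\Gamma\ge1$ and $\alpha\le\varepsilon<1$. Substituting $1/\alpha=\Gamma/\varepsilon$ into the hypothesis of Lemma~\ref{lem:stratified-vc-eps-net} gives exactly the assumed bound $\lambda\ge\frac{C\Gamma}{\varepsilon}\bigl(d\log\frac{\Gamma}{\varepsilon}+\log\frac1\varphi\bigr)$, with $C$ the constant from that lemma. The lemma then says that with probability at least $1-\varphi$, $M$ hits every $R$ with $q(R)\ge\varepsilon/\Gamma$.

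Finally, Lemma~\ref{lem:heavy-under-q} shows that every $R$ with $|R|\ge\varepsilon|X|$ has $q(R)\ge\varepsilon/\Gamma$. So on the good event, $M$, and hence $S$, hits every $\varepsilon$-heavy range, and $S$ is a fair $\varepsilon$-net. There is no real obstacle. The only points needing care are matching the distribution exactly to the lemma's hypothesis, including the integrality and feasibility conventions for $\lambda_c$, and confirming $\Gamma\ge1$ so that $\alpha$ lies in the allowed range.
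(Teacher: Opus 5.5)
Your proposal is correct and follows the paper's proof: set $\alpha=\varepsilon/\Gamma$, use Lemma~\ref{lem:heavy-under-q} to show that every $\varepsilon$-heavy range has $q(R)\ge\alpha$, apply Lemma~\ref{lem:stratified-vc-eps-net} to the sampled multiset $M$, and note that the post-processing keeps every hit while enforcing the quotas. Your extra checks are details the paper leaves implicit, and you handle them correctly: the within-color distribution is uniform, $\Gamma\ge 1$ holds (the color with $\mu_c\ge\tau_c$ can be taken with $\tau_c>0$ because of support compatibility), and the degenerate case $d=0$ reduces to the finite-range theorem.
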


\begin{proof}
	By Lemma~\ref{lem:heavy-under-q}, every $\varepsilon$-heavy range
	$R\in\mathcal R$ has
	\[
	q(R)\ge \frac{\varepsilon}{\Gamma}.
	\]
	Set
	\[
	\alpha=\frac{\varepsilon}{\Gamma}.
	\]
	By Lemma~\ref{lem:stratified-vc-eps-net}, with probability at least
	$1-\varphi$, the sampled multiset $M$ hits every range $R$ with
	$q(R)\ge\alpha$. Hence $M$ hits every $\varepsilon$-heavy range.

Because the output $S$ is obtained from $M$ by removing duplicate samples which meets
target-ratio constraints by construction, it is a fair $\varepsilon$-net.
\end{proof}

\subsection{Consequences for DP and Custom Ratios}
\label{Consequences for DP}

Under demographic parity, $\tau_c=\mu_c$ for all colors, so $\Gamma=1$.
Theorem~\ref{thm:vc-stratified} gives the following guarantee.

\begin{corollary}[DP fair $\varepsilon$-net]
	\label{cor:dp-no-logk}
	Suppose $(X,\mathcal R)$ has VC dimension $d$. Under demographic
	parity, there exists a constant $C>0$ such that for sample size
	\[
	\lambda
	\ge
	\frac{C}{\varepsilon}
	\left(
	d\log\frac{1}{\varepsilon}
	+
	\log\frac1\varphi
	\right),
	\]
	the target-stratified sampling returns, with probability at least
	$1-\varphi$, a DP-fair $\varepsilon$-net.
\end{corollary}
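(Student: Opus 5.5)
The plan is to derive Corollary~\ref{cor:dp-no-logk} directly from Theorem~\ref{thm:vc-stratified} by computing the shift parameter $\Gamma$ under demographic parity.

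Under demographic parity $\tau_c=\mu_c$ for every $c\in[k]$. The support-compatibility condition holds trivially: $\mu_c>0$ implies $\tau_c=\mu_c>0$. Hence
\[
\Gamma=\max_{c:\tau_c>0}\frac{\mu_c}{\tau_c}=1 .
\]
Classes with $\mu_c=0$ are empty, so they contribute neither to $\Gamma$ nor to the sampling.

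Substituting $\Gamma=1$ into the sample-size requirement of Theorem~\ref{thm:vc-stratified} gives exactly
\[
\lambda\ge\frac{C}{\varepsilon}\left(d\log\frac1\varepsilon+\log\frac1\varphi\right),
\]
with the same constant $C$. That theorem then yields, with probability at least $1-\varphi$, an output $S$ that hits every $\varepsilon$-heavy range. By construction, $|S\cap X_c|=\tau_c\lambda=\mu_c|S|$, which is precisely the DP fairness constraint.

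Two side conditions need checking. The first is that Lemma~\ref{lem:stratified-vc-eps-net} requires $\alpha=\varepsilon/\Gamma=\varepsilon\in(0,1)$. This holds for any nontrivial $\varepsilon$; if $\varepsilon\ge 1$, only ranges equal to $X$ are heavy and the statement is trivial. The second is the standing convention that the counts $\lambda_c=\mu_c\lambda$ are integral and at most $|X_c|$. This is the same feasibility assumption the paper adopts, and since $\mu_c\lambda\le\mu_c|X|=|X_c|$ whenever $\lambda\le|X|$, the bound $\lambda_c\le|X_c|$ is automatic.

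No step poses a real obstacle. The only point worth stating explicitly is that the target-stratified distribution $q$ coincides with the uniform distribution on $X$ when $\tau=\mu$. Indeed, $q(p)=\mu_c/|X_c|=1/|X|$ for $p\in X_c$, so $q(R)=|R|/|X|$. This means Lemma~\ref{lem:heavy-under-q} holds with equality, and the stratified sample is a "variance-reduced" uniform sample with fixed color counts. This also shows why no $\log(k/\varphi)$ repair factor appears: the stratification itself enforces the ratios exactly.
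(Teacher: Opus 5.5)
Your proposal is correct and matches the paper's argument: under demographic parity $\tau_c=\mu_c$ gives $\Gamma=1$, and the corollary follows by substituting this into Theorem~\ref{thm:vc-stratified}. Your extra checks make explicit what the paper leaves implicit, and they are consistent with it: that $\alpha=\varepsilon\in(0,1)$, that the counts $\lambda_c=\mu_c\lambda$ are feasible, and that $q$ reduces to the uniform distribution.
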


The next lemma shows that for custom-ratio fairness, the dependence on $\Gamma$ is unavoidable.

\begin{lemma}[$\Gamma$ dependence is necessary]
	\label{lem:gamma-lower-bound}
	For every $\Gamma\ge 1$, there are range-space instances with
	distribution-shift parameter $\Gamma$ for which every fair $\varepsilon$-net
	satisfying the custom-ratio fairness has size
	\[
	\Omega\left(\frac{\Gamma}{\varepsilon}\right).
	\]
\end{lemma}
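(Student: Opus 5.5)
We will prove the statement with an explicit two-color construction in which every heavy range lies inside the overrepresented color class. The target ratio then forces the total size to be large. Fix $\Gamma\ge 1$ and $\varepsilon\in(0,1)$, and assume for the main case that $\Gamma\ge 2$; the case $\Gamma<2$ follows from the classical fact that $\varepsilon$-nets of points on a line with intervals need size $\Omega(1/\varepsilon)$. Take $k=2$ with input proportions $\mu_1=1/2$ and $\mu_2=1/2$, and targets $\tau_1=\frac{1}{2\Gamma}$ and $\tau_2=1-\frac{1}{2\Gamma}$. Then
\[
\Gamma=\max\left(\frac{\mu_1}{\tau_1},\frac{\mu_2}{\tau_2}\right)=\max\left(\Gamma,\ \frac{1/2}{1-1/(2\Gamma)}\right)=\Gamma,
\]
since the second ratio is at most $1$. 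Place the points of $X_1$ on a line at positions $1,\dots,n/2$. Let $\mathcal R$ consist of all sets of $\lfloor\varepsilon n\rfloor$ consecutive points of $X_1$ (intervals), together with the whole set $X_2$ so that ranges are not restricted to $X_1$. Interval ranges have VC dimension $2$, so the instance is a legitimate geometric range space.

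\textbf{Step 1: many points of $X_1$ are needed.} The $\varepsilon$-heavy intervals inside $X_1$ include $\Omega(1/\varepsilon)$ pairwise disjoint blocks of $\lceil\varepsilon n\rceil$ consecutive points; there are $\lfloor 1/(2\varepsilon)\rfloor$ of them, using $\varepsilon\le 1/4$. Only points of $X_1$ can hit these blocks, so every $\varepsilon$-net $S$ satisfies $|S\cap X_1|\ge \lfloor 1/(2\varepsilon)\rfloor=\Omega(1/\varepsilon)$.

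\textbf{Step 2: the ratio constraint amplifies this.} Custom-ratio fairness requires $|S\cap X_1|=\tau_1|S|=|S|/(2\Gamma)$. Combining with Step 1 gives
\[
|S|=2\Gamma\,|S\cap X_1|=\Omega\!\left(\frac{\Gamma}{\varepsilon}\right).
\]

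\textbf{Step 3: the instance is feasible.} We must check the paper's conventions, namely $\lambda_c\le|X_c|$ and integrality. Pick $n$ large, say $n\ge 4\Gamma/\varepsilon$, and pick $\Gamma$ so that $2\Gamma$ is rational with small denominator; otherwise round, as the paper allows. Then a fair net exists: take one point from each disjoint block, completed to a full interval cover of stride $\lfloor\varepsilon n\rfloor$, and pad with points of $X_2$ to reach the required ratio.

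The main obstacle is this bookkeeping. The first concern is the rounding of $\tau_c\lambda$ when $\Gamma$ is not of a convenient form. We handle it by choosing the instance parameters so that $2\Gamma$ and $1/\varepsilon$ are integers, or by adjusting $\tau_1$ by a constant factor, which changes the bound only by a constant. The second concern is making sure that $\mu_2/\tau_2$ does not exceed $\Gamma$, which Step 0 already verifies.
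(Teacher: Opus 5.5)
Your proposal is correct and follows essentially the same route as the paper. Both use a color $c_0$ with $\mu_{c_0}=1/2$ and $\tau_{c_0}=1/(2\Gamma)$, place $\lfloor 1/(2\varepsilon)\rfloor$ pairwise disjoint $\varepsilon$-heavy ranges inside $X_{c_0}$, and use the ratio constraint $|S|=|S\cap X_{c_0}|/\tau_{c_0}$ to amplify the count to $\Omega(\Gamma/\varepsilon)$; your explicit two-color instance, the interval realization and the feasibility check are concrete additions the paper omits.

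Two small cleanups. The split at $\Gamma=2$ is unnecessary, because your check $\mu_2/\tau_2=\frac{1/2}{1-1/(2\Gamma)}\le 1$ already holds for every $\Gamma\ge 1$; the classical interval bound on its own would not give an instance whose parameter is exactly $\Gamma$. Also, the interval ranges should contain $\lceil\varepsilon n\rceil$ points rather than $\lfloor\varepsilon n\rfloor$, since otherwise they need not be $\varepsilon$-heavy.
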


\begin{proof}
 Construct an instance in which a color $c_0$ has $\mu_{c_0}=1/2$ and 
	$\tau_{c_0}=1/(2\Gamma)$. Then $\mu_{c_0}/\tau_{c_0}=\Gamma$.  The other ratios $\tau_c$ are chosen arbitrarily subject to the constraint, together with $\tau_{c_0}$, add up to 1.
	
	Inside $X_{c_0}$, construct pairwise disjoint ranges
	\[
	R_1,\ldots,R_L\subseteq X_{c_0},
	\]
	each of size $|R_j|=\varepsilon |X|$, where
	\[
	L=\left\lfloor \frac{1}{2\varepsilon}\right\rfloor
	=
	\Omega\left(\frac1\varepsilon\right).
	\]
	 Such ranges $R_1,\ldots,R_L$ exist because $\mu_{c_0}=1/2$.
	Every $\varepsilon$-net must hit all these ranges. Since they are disjoint
	and contained in $X_{c_0}$, any feasible set $S$ must satisfy
	\[
	|S\cap X_{c_0}|\ge L.
	\]
	If $S$ satisfies $|S\cap X_{c_0}|=\tau_{c_0}|S|$, then
	\[
	|S|
	\ge
	\frac{L}{\tau_{c_0}}
	=
	\Omega\left(\frac{\Gamma}{\varepsilon}\right).
	\]
	The lemma is proved.
\end{proof}

\begin{remark}[Weighted range spaces]
	The same analysis extends to weighted range spaces. Let $\omega_p\ge 0$ be the
	input weight of point $p\in X$, and write
	\[
	\omega(A)=\sum_{p\in A}\omega_p .
	\]
	A range $R$ is weighted $\varepsilon$-heavy if
	\[
	\omega(R)\ge \varepsilon \omega(X).
	\]
	For each color class, define the weighted input proportion
	\[
	\mu_c^w=\frac{\omega(X_c)}{\omega(X)}
	\]
	and set
	\[
	\Gamma_w=\max_{c:\tau_c>0}\frac{\mu_c^w}{\tau_c}.
	\]
	Then the proof of Lemma~\ref{lem:heavy-under-q} gives
	\[
	q_w(R)\ge \frac{\varepsilon}{\Gamma_w}
	\]
	for every weighted $\varepsilon$-heavy range $R$. The sampling bounds
	carry over to weighted range spaces replacing $\Gamma$ with $\Gamma_w$.
\end{remark}

\subsection{Improved Fair Geometric Hitting Set}
\label{sc improve fair ge hit}

The following is an LP relaxation for the fair geometric
hitting set (FGHS) problem:
\[
\begin{array}{ll}
	\min & f \\[1mm]
	\text{s.t.}
	&
	\left\{
	\begin{array}{ll}
		\displaystyle f=\sum_{p\in X}z_p, 
		& \\[2mm]
		
		\displaystyle \sum_{p\in R}z_p\ge 1,
		& \forall R\in\mathcal R,\\[3mm]
		
		\displaystyle \sum_{p\in X_c}z_p=\tau_c f,
		& \forall c\in[k],\\[3mm]
		
		\displaystyle 0\le z_p\le 1,
		& \forall p\in X.
	\end{array}
	\right.
\end{array}
\]

 The algorithm is presented in the following.

\begin{algorithm}[htbp]
	\caption{\textsc{LP-Stratified-FGHS-Rounding}}
	\label{alg:lp-stratified-fghs}
	\begin{minipage}{0.95\linewidth}
		\textbf{Input:} FGHS instance $(X,\mathcal R)$, color partition
		$X=X_1\cup\cdots\cup X_k$, target ratio vector
		$\tau=(\tau_1,\ldots,\tau_k)$, and sample size $\lambda$.
		
		\textbf{Output:} a set $S\subseteq X$.
		
		\begin{enumerate}
			\item Let $(z,f)$ be an optimal solution to the LP.
			Set
			\[
			w_p=\frac{z_p}{f},
			\qquad p\in X.
			\]
			
			\item For each color $c\in[k]$, set $\lambda_c=\tau_c\lambda$.
			
			\item Sample $\lambda_c$ points from $X_c$ with replacement, where
			each point $p\in X_c$ is chosen with probability $w_p/\tau_c$.
			Let $M_c$ be the resulting multiset.
			
			\item Let $S_c$ be the set of distinct points in $M_c$.
			If $|S_c|<\lambda_c$, then add $\lambda_c-|S_c|$ arbitrary points
			from $X_c\setminus S_c$.
			
			\item Return $S=\bigcup_{c=1}^k S_c$.
		\end{enumerate}
	\end{minipage}
\end{algorithm}

\begin{theorem}[Improved Fair Geometric Hitting Set]
	\label{thm:improved-fghs}
	Suppose $(X,\mathcal R)$ has VC dimension $d$. Let
	$OPT_{\mathrm{FGHS}}$ be the size of a minimum fair geometric hitting set.
	Choose
	\[
	\lambda
	=
	O\left(
	f
	\left(
	d\log f+\log\frac1\varphi
	\right)
	\right),
	\]
where $f$ is the optimal objective value of the LP,  then Algorithm~\ref{alg:lp-stratified-fghs} returns, with probability
	at least $1-\varphi$, a fair hitting set $S$ satisfying
	\[
	|S|
	=
	O\left(
	OPT_{\mathrm{FGHS}}
	\left(
	d\log OPT_{\mathrm{FGHS}}
	+
	\log\frac1\varphi
	\right)
	\right).
	\]
\end{theorem}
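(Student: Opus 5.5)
The plan is to view the LP solution as defining a distribution $q$ on $X$ and apply Lemma~\ref{lem:stratified-vc-eps-net} directly.

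\textbf{Step 1: the LP solution is a valid stratified distribution.} Take the optimal $(z,f)$ and set $q(p)=w_p=z_p/f$. Then $\sum_p q(p)=1$ because $f=\sum_p z_p$. The fairness constraints of the LP give
\[
q(X_c)=\frac{1}{f}\sum_{p\in X_c}z_p=\tau_c
\]
for every $c$. So $q$ satisfies the hypothesis of Lemma~\ref{lem:stratified-vc-eps-net}. Step~3 of Algorithm~\ref{alg:lp-stratified-fghs} samples $\lambda_c$ points from $X_c$ with probability $w_p/\tau_c=q(p)/\tau_c$, which is exactly the sampling rule in that lemma.

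We also need $f>0$. If every range is nonempty, some $z_p>0$, so $f\ge 1$; we may assume $\mathcal R\neq\emptyset$. If $f=1$, then $\log f=0$, so we replace $\log f$ by $\log(f+1)$ to keep the sample size $\Omega(\log(1/\varphi))$; this changes only constants.

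\textbf{Step 2: every range is heavy under $q$.} Every $R\in\mathcal R$ satisfies $\sum_{p\in R}z_p\ge 1$, hence
\[
q(R)=\frac{1}{f}\sum_{p\in R}z_p\ge \frac1f.
\]
Apply Lemma~\ref{lem:stratified-vc-eps-net} with $\alpha=1/f$. Since $f\ge1$, we may take $\alpha<1$, or handle $f=1$ by using $\alpha=1/2$. The required sample size becomes
\[
\lambda\ge C f\left(d\log f+\log\tfrac1\varphi\right),
\]
which is the choice in the statement. With probability at least $1-\varphi$, the multiset $M=\bigcup_c M_c$ hits every range.

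\textbf{Step 3: post-processing and the size bound.} As in Theorem~\ref{thm:vc-stratified}, deduplicating $M$ and then padding within each color class only adds points. Hence $S$ hits every range and has $|S\cap X_c|=\lambda_c=\tau_c\lambda$ exactly, using the integrality and feasibility convention $\lambda_c\le|X_c|$. So $S$ is a fair hitting set of size $\lambda$.

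For the size bound, the LP is a relaxation of FGHS: an optimal fair hitting set $H$ yields a feasible LP solution with $z=\mathbf 1_H$ and $f=|H|$. Therefore $f\le OPT_{\mathrm{FGHS}}$. Since $x\log x$ is monotone for $x\ge1$,
\[
\lambda=O\!\left(f\left(d\log f+\log\tfrac1\varphi\right)\right)=O\!\left(OPT_{\mathrm{FGHS}}\left(d\log OPT_{\mathrm{FGHS}}+\log\tfrac1\varphi\right)\right).
\]

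\textbf{Main obstacle.} There is no deep obstacle, because the work is already done in Lemma~\ref{lem:stratified-vc-eps-net}. The point needing care is that the lemma applies to a general $q$ whose color masses equal $\tau_c$. The LP's fairness constraint $\sum_{p\in X_c}z_p=\tau_c f$ is precisely what guarantees this. This is why we avoid the $\log(k/\varphi)$ repair factor: the sample is fair by construction, so no points are added to restore ratios.

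The remaining technicality is the boundary handling. Colors with $\tau_c=0$ have $q(X_c)=0$, so no $z_p$ in $X_c$ is positive and nothing is sampled there. Rounding of $\lambda_c$ to integers is handled by the paper's standing convention.
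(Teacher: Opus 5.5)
Your proposal is correct and follows the paper's proof essentially step for step. Normalize the LP solution to $w_p=z_p/f$. The fairness constraints give $w(X_c)=\tau_c$ and the covering constraints give $w(R)\ge 1/f$. Invoke Lemma~\ref{lem:stratified-vc-eps-net} with $q=w$ and $\alpha=1/f$. Note that deduplication and within-color padding preserve both hitting and exact color counts. Finally, bound $f\le OPT_{\mathrm{FGHS}}$ using the indicator vector of an optimal fair hitting set.

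Your extra boundary handling for $f$ near $1$ is a small refinement the paper skips. The real reason it is needed is that the lemma requires $\alpha\in(0,1)$ and its $d\log(1/\alpha)$ term vanishes as $\alpha\to1$. It is not about preserving the $\log(1/\varphi)$ term, which is already present at $f=1$. At $f=1$ exactly, a single sample already hits every range.
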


\begin{proof}
	Let $w$ be the distribution defined in Algorithm~\ref{alg:lp-stratified-fghs}.
	By the LP constraints, we have
	\[
	\sum_{p\in X}w_p=1,
	\quad
	w(X_c)=\tau_c,\ \forall c\in[k],
	\quad
	w(R)\ge \frac1f,\ \forall R\in\mathcal R .
	\]
	Define the family of $1/f$-heavy ranges under $w$ by
	\[
	\mathcal R^{w}_{1/f}
	=
	\left\{
	R\in\mathcal R:\ w(R)\ge \frac1f
	\right\}.
	\]
	The  second constraint of the LP implies
	\[
	\mathcal R\subseteq \mathcal R^{w}_{1/f}.
	\]
	Thus an $1/f$-net hits all ranges in $\mathcal R$.  The sampling
	step of Algorithm~\ref{alg:lp-stratified-fghs} matches the setting of
	Lemma~\ref{lem:stratified-vc-eps-net} with
	\[
	q=w
	\qquad\text{and}\qquad
	\alpha=\frac1f .
	\]
	By the choice of $\lambda$, Lemma~\ref{lem:stratified-vc-eps-net} implies
	that the sampled multiset hits every range in $\mathcal R$ with probability
	at least $1-\varphi$.
	
	The returned set contains the support of the sampled multiset and only adds
	points, so it also hits every range in $\mathcal R$. Moreover, for each color
	$c$, the algorithm returns exactly $\lambda_c=\tau_c\lambda$ points from
	$X_c$. Hence the returned set satisfies the target ratios. Therefore $S$ is a
	fair hitting set with probability at least $1-\varphi$.
	
	 For the approximate effect, consider an optimal fair hitting set $S^*$. Construct variable $z_p^*$ to be the indicator of whether point $p\in S^*$ and let $f^*=\sum_{p\in\mathcal P}z_p^*$. Then $(z^*,f^*)$ satisfies all constraints of the LP. As a consequence, $(z,f)$, being an optimal (fractional) solution to the LP, has
	\[
	f\le |S^*|=OPT_{\mathrm{FGHS}}.
	\]Combining this with $|S|=\lambda$ and the choice of $\lambda$,
	\[
	|S|
	=
	O\left(
	OPT_{\mathrm{FGHS}}
	\left(
	d\log OPT_{\mathrm{FGHS}}
	+
	\log\frac1\varphi
	\right)
	\right).
	\]
	
	The theorem is proved.
	\end{proof}

\subsection{From FGHS to Custom-Ratio Fair $\varepsilon$-Nets}
\label{from fghs to custom}

 Interestingly, we could use the above FGHS algorithm to construct a custom-ratio fair $\varepsilon$-net whose size approximate an optimal one without depending on parameter $\Gamma$.

Recall that a custom-ratio fair $\varepsilon$-net asks for a set of points to hit all ranges in
\[
\mathcal R_\varepsilon
=
\{R\in\mathcal R:\ |R|\ge \varepsilon |X|\},
\]
subject to a target ratio vector constraint $\tau$.

Viewing $(X,\mathcal R_\varepsilon)$ as an FGHS instance, apply Algorithm~\ref{alg:lp-stratified-fghs}. By
Theorem~\ref{thm:improved-fghs}, with probability at least $1-\varphi$, the
output is a custom-ratio fair $\varepsilon$-net for the original instance $(X,\mathcal R)$.
Let $OPT_{\mathrm{CR}\text{-}\varepsilon\mathrm{Net}}$ be the size of a minimum custom-ratio fair $\varepsilon$-net.  By observing that
\[
f\leq OPT_{\mathrm{FGHS}}(X,\mathcal R_\varepsilon)
=
OPT_{\mathrm{CR}\text{-}\varepsilon\mathrm{Net}},
\]
we have the following result.

\begin{corollary}[Custom-ratio fair $\varepsilon$-net via FGHS]
	\label{cor:cr-epsnet-via-fghs}
	Let $d$ be the VC dimension of $(X,\mathcal R)$. There is a randomized algorithm that returns, with probability at least $1-\varphi$, a custom-ratio fair $\varepsilon$-net $S$ satisfying
	\[
	|S|
	=
	O\left(
	OPT_{\mathrm{CR}\text{-}\varepsilon\mathrm{Net}}
	\left(
	d\log OPT_{\mathrm{CR}\text{-}\varepsilon\mathrm{Net}}
	+
	\log\frac1\varphi
	\right)
	\right).
	\]
\end{corollary}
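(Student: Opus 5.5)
The plan is to reduce the corollary directly to Theorem~\ref{thm:improved-fghs} applied to the instance $(X,\mathcal R_\varepsilon)$. Three facts are needed.

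\emph{Step 1 (the reduction preserves feasibility).} A set $S\subseteq X$ is a custom-ratio fair $\varepsilon$-net for $(X,\mathcal R)$ with target $\tau$ exactly when it is a fair hitting set for $(X,\mathcal R_\varepsilon)$ with target $\tau$. Both conditions require hitting every $R$ with $|R|\ge\varepsilon|X|$ and the same color counts $|S\cap X_c|=\tau_c|S|$. Consequently $OPT_{\mathrm{FGHS}}(X,\mathcal R_\varepsilon)=OPT_{\mathrm{CR}\text{-}\varepsilon\mathrm{Net}}$.

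\emph{Step 2 (VC dimension does not increase).} Since $\mathcal R_\varepsilon\subseteq\mathcal R$, every trace $\mathcal R_\varepsilon|_Y$ is contained in $\mathcal R|_Y$. Hence any set shattered by $\mathcal R_\varepsilon$ is shattered by $\mathcal R$, and the VC dimension of $(X,\mathcal R_\varepsilon)$ is at most $d$. This is the only point where the VC dimension enters. It matters because Lemma~\ref{lem:stratified-vc-eps-net}, and therefore Theorem~\ref{thm:improved-fghs}, uses the Sauer--Shelah bound on traces of the range family actually being hit.

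\emph{Step 3 (apply the FGHS theorem).} Solve the LP for $(X,\mathcal R_\varepsilon)$. It has $m_\varepsilon=|\mathcal R_\varepsilon|$ covering constraints, and $\mathcal R_\varepsilon$ can be enumerated from $\mathcal R$ by computing each $|R|$. Then run Algorithm~\ref{alg:lp-stratified-fghs} with $\lambda=O(f(d\log f+\log(1/\varphi)))$. By Theorem~\ref{thm:improved-fghs}, with probability at least $1-\varphi$ the output is a fair hitting set of $(X,\mathcal R_\varepsilon)$, so by Step~1 it is a custom-ratio fair $\varepsilon$-net.

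For the size bound, the LP optimum satisfies $f\le OPT_{\mathrm{FGHS}}(X,\mathcal R_\varepsilon)=OPT_{\mathrm{CR}\text{-}\varepsilon\mathrm{Net}}$, because the indicator vector of an optimal fair $\varepsilon$-net is LP-feasible. The function $x\mapsto x(d\log x+\log(1/\varphi))$ is monotone for $x\ge1$, so
\[
|S|=\lambda=O\!\left(OPT_{\mathrm{CR}\text{-}\varepsilon\mathrm{Net}}\left(d\log OPT_{\mathrm{CR}\text{-}\varepsilon\mathrm{Net}}+\log\tfrac1\varphi\right)\right).
\]

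\emph{Main obstacle: integrality and degenerate cases.} Theorem~\ref{thm:improved-fghs} needs integral counts $\lambda_c=\tau_c\lambda\le|X_c|$ and an LP with $f\ge1$. For $f\ge1$: if $\mathcal R_\varepsilon$ is nonempty, any range gives $\sum_{p\in R}z_p\ge1$, so $f\ge1$. If $\mathcal R_\varepsilon$ is empty, any fair set works and the statement is trivial.

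For integrality, I would adopt the paper's standing feasibility convention on $\lambda$. When $\tau_c\lambda$ is not an integer, round $\lambda$ up to the smallest multiple of the common denominator of the $\tau_c$ above the required value. This requires that the exact-ratio constraint is satisfiable at all, which is implied by the existence of $OPT_{\mathrm{CR}\text{-}\varepsilon\mathrm{Net}}$. That optimum's size is itself such a multiple, so the rounding loses at most a constant factor relative to $OPT_{\mathrm{CR}\text{-}\varepsilon\mathrm{Net}}$.

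The cap $\lambda_c\le|X_c|$ matters only if $\lambda$ exceeds $|X|$. In that case I would simply return any fair set of the required counts that contains a feasible solution, for instance an optimal one found by the same LP-rounding argument. I expect to handle this by invoking the convention rather than by a new argument.
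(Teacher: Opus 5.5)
Your proposal is correct and follows the paper's own route: run Algorithm~\ref{alg:lp-stratified-fghs} on the FGHS instance $(X,\mathcal R_\varepsilon)$, invoke Theorem~\ref{thm:improved-fghs}, and use $f\le OPT_{\mathrm{FGHS}}(X,\mathcal R_\varepsilon)=OPT_{\mathrm{CR}\text{-}\varepsilon\mathrm{Net}}$; your checks that $\mathcal R_\varepsilon$ has VC dimension at most $d$, that the size bound is monotone in $f$, and that $f\ge 1$ fill in details the paper leaves implicit. One side remark is wrong but harmless, since you end up relying on the paper's standing feasibility convention exactly as the paper does: the cap $\tau_c\lambda\le|X_c|$ fails once $\lambda>(\mu_c/\tau_c)|X|$, which is well below $|X|$ whenever $\tau_c>\mu_c$ (the typical custom-ratio case), and then no fair set with the required counts exists, so your suggested fallback does not work as stated.
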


\begin{remark}
		The lower bound above is an absolute-size lower bound, whereas the
		LP-rounding guarantee in Theorem~\ref{thm:improved-fghs} and
		Corollary~\ref{cor:cr-epsnet-via-fghs} is relative to the optimum feasible
		solution. In the lower-bound construction, this optimum is itself
		\[
		\Omega\left(\frac{\Gamma}{\varepsilon}\right).
		\]
		Hence the LP-rounding guarantee does not contradict the lower bound.
\end{remark}

\section{Experimental Evaluation}
\label{sec:experiments}

We evaluate the target-stratified framework from several aspects.
We first test its performance on real datasets under demographic parity
and custom-ratio fairness, including settings with many intersectional groups.
We then use synthetic datasets to study the effects of dataset size, the number
of protected groups, the distribution-shift parameter $\Gamma$, and group
overlap. We also compare our method with Fair Sketch-and-Merge (FSM), evaluate
our rounding method for Fair Geometric Hitting Set (FGHS), and study the use of
fair summaries for downstream range-query filtering.

\subsection{Experimental Setup}

\paragraph{Datasets and range workloads.}
We use two real tabular datasets:
\href{https://archive.ics.uci.edu/dataset/2/adult}{Adult} and
\href{https://www.propublica.org/datastore/dataset/compas-recidivism-risk-score-data-and-analysis}{COMPAS}.
Each record is treated as a point in a numerical feature space, and range
queries are axis-aligned hyperrectangles. A summary is valid if it satisfies
the required group quotas and hits every tested $\varepsilon$-heavy range.

\begin{table}[htbp]
	\centering
	\caption{Dataset settings.}
	\label{tab:dataset-settings}
	\footnotesize
	\renewcommand{\arraystretch}{1.15}
	\setlength{\tabcolsep}{3pt}
	\begin{tabularx}{\columnwidth}{@{}lclX@{}}
		\toprule
		Dataset & \# Records & Group & Range attributes \\
		\midrule
		
		Adult
		& 32,561
		& \texttt{sex}
		& \texttt{age}, \texttt{education\_num},
		\texttt{hours\_per\_week}, \texttt{capital\_gain},
		\texttt{capital\_loss}, \texttt{fnlwgt} \\
		
		COMPAS
		& 7,214
		& \texttt{race}
		& \texttt{age}, \texttt{priors\_count},
		\texttt{decile\_score}, \texttt{juv\_fel\_count},
		\texttt{juv\_misd\_count}, \texttt{juv\_other\_count} \\
		
		\bottomrule
	\end{tabularx}
\end{table}

Preprocessing keeps all records. Missing group values are mapped to
\texttt{Unknown}, and missing numerical values are replaced by the median.
For each dataset, we generate quantile-box workloads using five empirical
quantile thresholds for each numerical attribute. Empty and duplicate ranges
are removed, and the workload contains at most $20{,}000$ ranges.

\paragraph{Compared methods.}
Our main baseline is the sample-and-repair method of
Dehghankar et al.~\cite{dehghankar2025fair}.
It first draws a uniform sample and then adds records from underrepresented
groups until the required group ratios are satisfied.
Our method instead fixes the number of samples from each group before
sampling.

Both methods use the same fairness constraints and the same heavy-range
validation procedure. Our implementation builds on the code released by
Dehghankar et al.~\cite{fairnetcode}. The code and data used in our experiments
are available at
\href{https://github.com/artifact-hub/code}{GitHub}.
We also compare with Fair Sketch-and-Merge (FSM) later in this section.

\paragraph{Fairness settings.}
We consider demographic parity (DP) and custom-ratio fairness (CR).
Under DP, the target ratios are the empirical group proportions in the full
dataset. Under CR, the target ratios are specified in
Table~\ref{tab:cr-target-ratios}. When the target counts are not integral, we
use the largest-remainder rule to obtain integer quotas.

\begin{table}[htbp]
	\centering
	\caption{Custom target ratios.}
	\label{tab:cr-target-ratios}
	\scriptsize
	\renewcommand{\arraystretch}{1.15}
	\setlength{\tabcolsep}{3pt}
	\begin{tabularx}{\columnwidth}{@{}lX@{}}
		\toprule
		Dataset & Custom target ratios \\
		\midrule
		
		Adult
		& Female: $0.50$, Male: $0.50$ \\
		
		COMPAS
		& African-American: $0.40$, Caucasian: $0.35$,
		Hispanic: $0.11$, Other: $0.09$, Asian: $0.025$,
		Native American: $0.015$ \\
		
		\bottomrule
	\end{tabularx}
\end{table}

\paragraph{Evaluation metrics.}
For the real-data DP and CR experiments, we use
\[
\varepsilon\in\{0.05,0.08,0.10,0.15,0.20\}.
\]
For each dataset, fairness setting, and value of $\varepsilon$, we run
$30$ independent trials with at most $200$ attempts per trial.
We report the average size of a valid summary and the average time needed to
obtain the first valid summary. The running time includes sampling, repair when
needed, and validation against all tested heavy ranges. All reported outputs
are checked for both fairness and range coverage.

\subsection{Fair Range Summaries on Real Data}

We first evaluate the two methods on Adult and COMPAS. We start with DP and CR
constraints based on one protected attribute and then consider intersectional
groups formed by multiple attributes.

\subsubsection{DP and CR Fairness}

\begin{figure}[htbp]
	\centering
	\includegraphics[width=\linewidth]
	{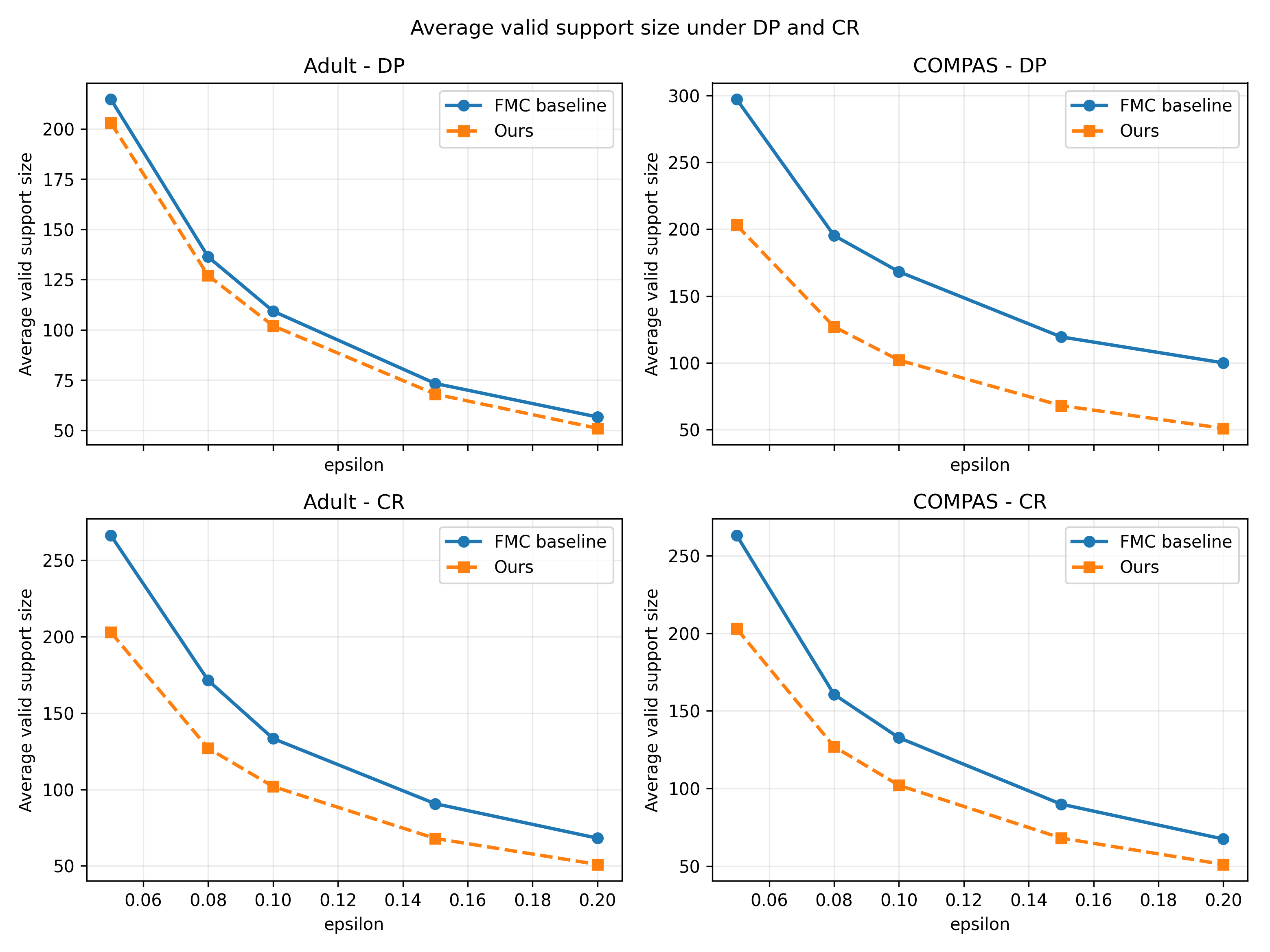}
	\caption{Average valid summary size under DP and CR target ratios.}
	\label{fig:dp-cr-sample-size}
\end{figure}

\begin{figure}[htbp]
	\centering
	\includegraphics[width=\linewidth]
	{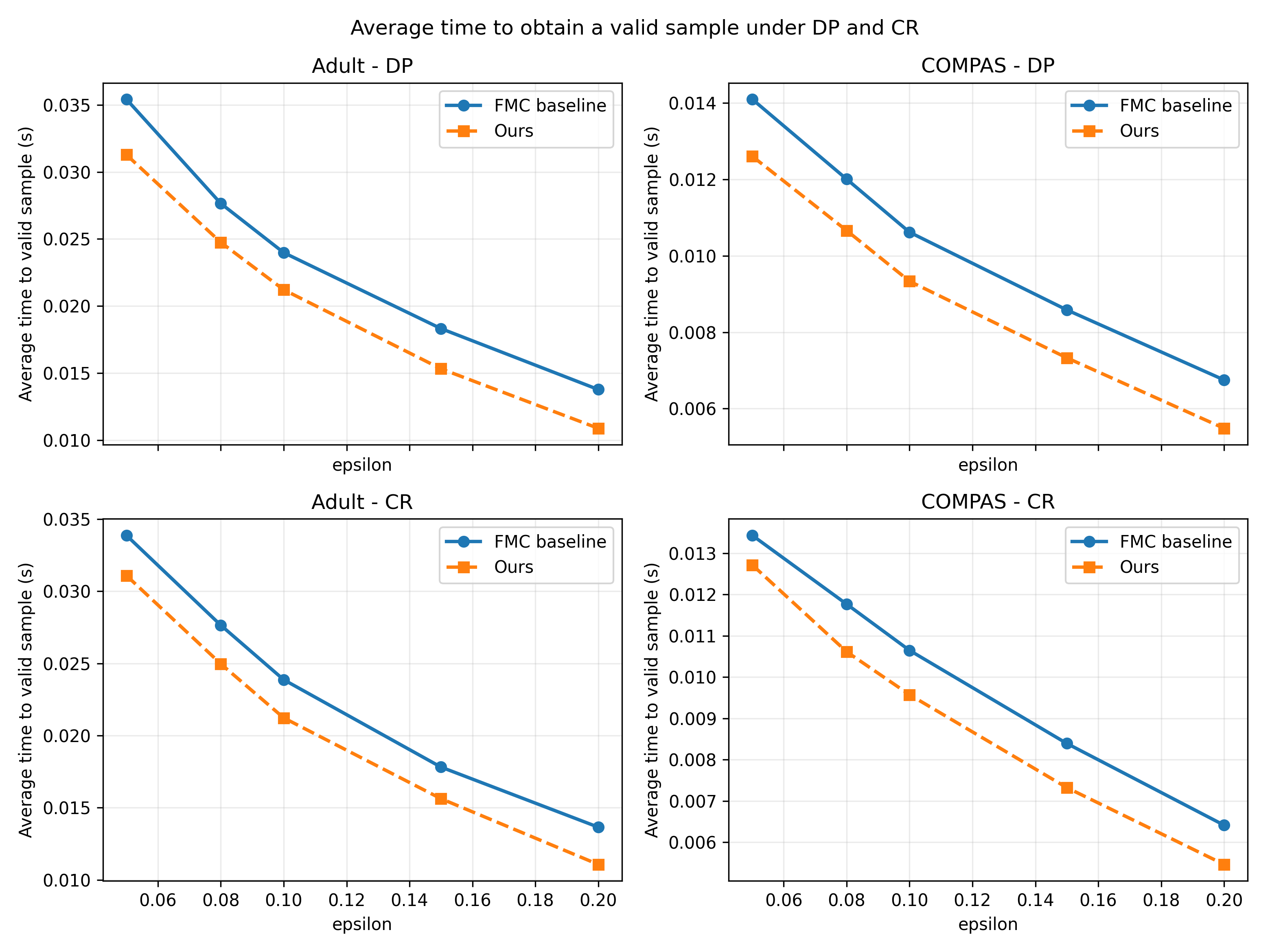}
	\caption{Average time to obtain the first valid summary under DP and CR
		target ratios.}
	\label{fig:dp-cr-time}
\end{figure}

Figures~\ref{fig:dp-cr-sample-size} and~\ref{fig:dp-cr-time} show that our
method consistently produces smaller valid summaries than sample-and-repair.

Under CR, our method reduces the summary size by
$23.5\%$--$25.9\%$ on Adult and $20.9\%$--$24.3\%$ on COMPAS.
Under DP, the reduction is $5.5\%$--$10.0\%$ on Adult and
$31.7\%$--$49.0\%$ on COMPAS.
Our method is also faster in these experiments.

The main reason is that sample-and-repair may need to add records after the
initial sample is drawn. Our method fixes the group quotas before sampling and
therefore avoids this repair step.

\subsubsection{Intersectional Group Scaling}

The previous experiments use one protected attribute. We next consider
fairness constraints defined by several attributes, which can produce many
intersectional groups.

We fix $\varepsilon=0.10$. For Adult, we start with \texttt{sex} and
successively add \texttt{race}, \texttt{education}, \texttt{workclass}, and
\texttt{marital\_status}. For COMPAS, we start with \texttt{race} and add
\texttt{sex}, \texttt{age}, charge degree, and prior-count buckets.
Groups with fewer than $50$ records are merged into \texttt{Other}.

The initial sample size $\lambda$ is the same as in the DP experiment. It is
increased only when necessary to assign at least one point to every group with
a positive target ratio.

\begin{figure}[htbp]
	\centering
	\includegraphics[width=0.48\linewidth]
	{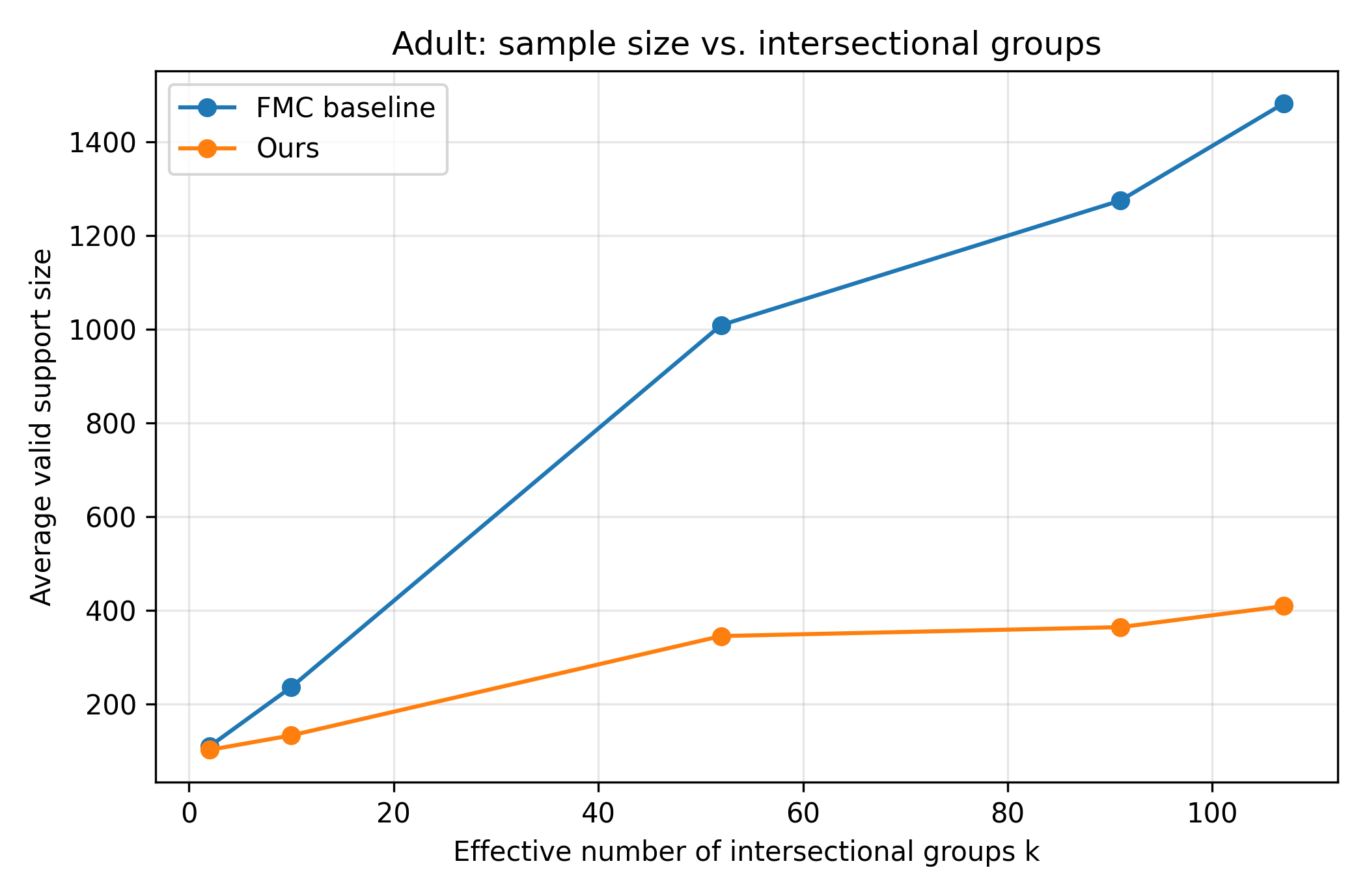}
	\hfill
	\includegraphics[width=0.48\linewidth]
	{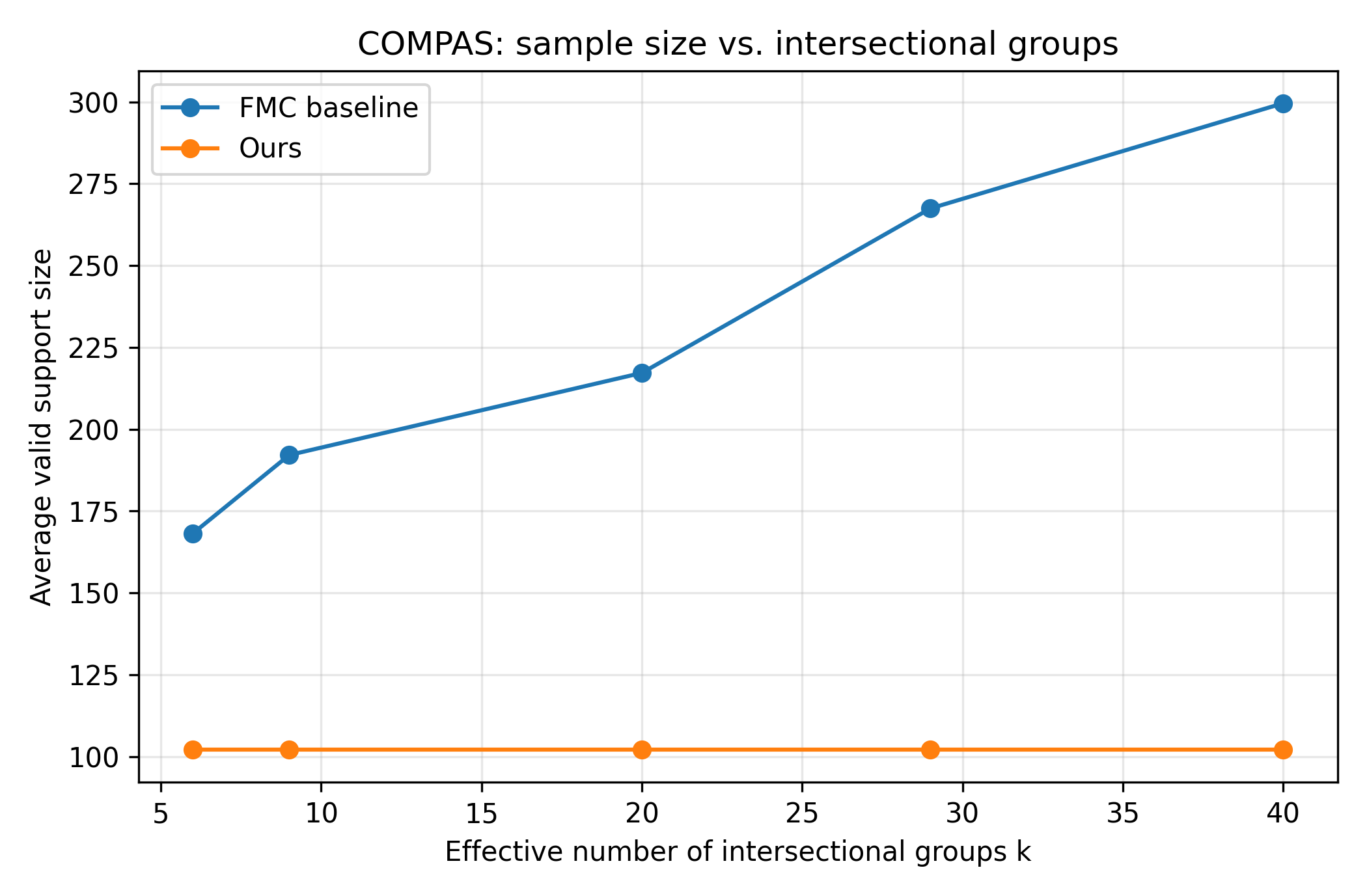}
	\caption{Valid summary size as the number of intersectional groups $k$
		increases. Left: Adult. Right: COMPAS.}
	\label{fig:k-scaling}
\end{figure}

Figure~\ref{fig:k-scaling} shows that the difference between the two methods
becomes much larger as the group partition becomes finer.

On Adult, the number of groups increases from $2$ to $107$.
The average summary size of sample-and-repair increases from $109.3$ to
$1482.2$, while that of our method increases from $102$ to $409$.

On COMPAS, the number of groups increases from $6$ to $40$.
The summary size of sample-and-repair increases from $168.2$ to $299.6$,
while our method remains at $102$. In this case, the initial sample size is
already large enough to assign a positive quota to every group.

These results show that post-hoc repair becomes costly when the number of
groups is large. Our method fixes all group quotas before sampling and avoids
most of this increase.

\subsection{Scalability and Sensitivity Analysis}

We next use synthetic datasets to study individual factors separately.
We consider the dataset size, the number of protected groups, the parameter
$\Gamma$, and the overlap between groups.

\subsubsection{Dataset Size}

We first study the effect of the dataset size. We fix $\varepsilon=0.1$ and
the other main settings, and increase the number of records.

\begin{figure}[htbp]
	\centering
	\includegraphics[width=\linewidth]
	{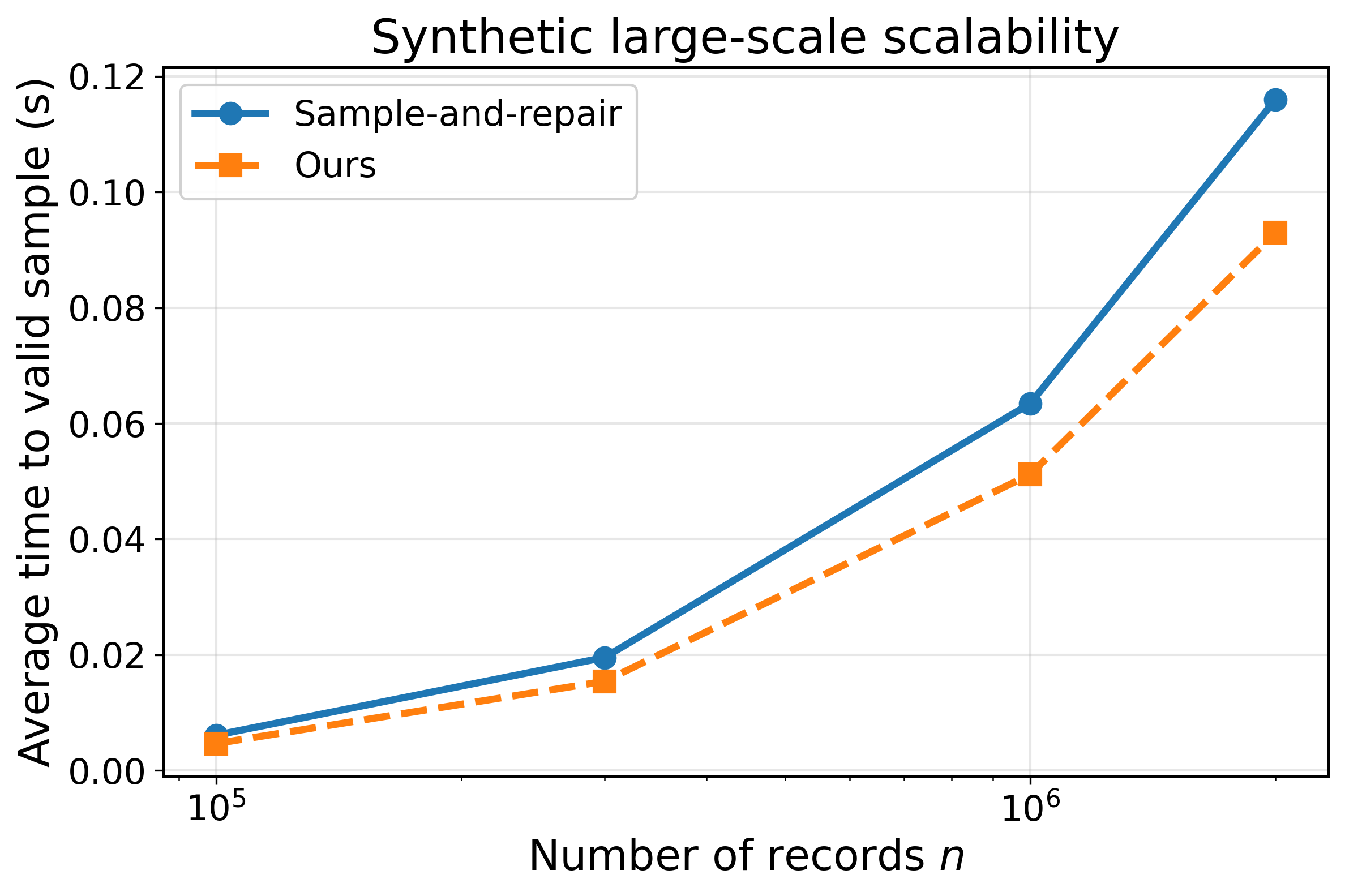}
	\caption{Running time as the dataset size increases.}
	\label{fig:synthetic-large-scale-n}
\end{figure}

Figure~\ref{fig:synthetic-large-scale-n} shows that both methods scale well as
the dataset size increases, including datasets with up to two million records.
Our method is slightly faster over the tested range because it does not need
the additional repair step.

This experiment shows that target-stratified sampling remains efficient on
large datasets.

\subsubsection{Number of Protected Groups}

The intersectional experiment above uses group partitions from real data.
Here we vary the number of groups directly while keeping the other main
settings fixed.

\begin{figure}[htbp]
	\centering
	\includegraphics[width=\linewidth]
	{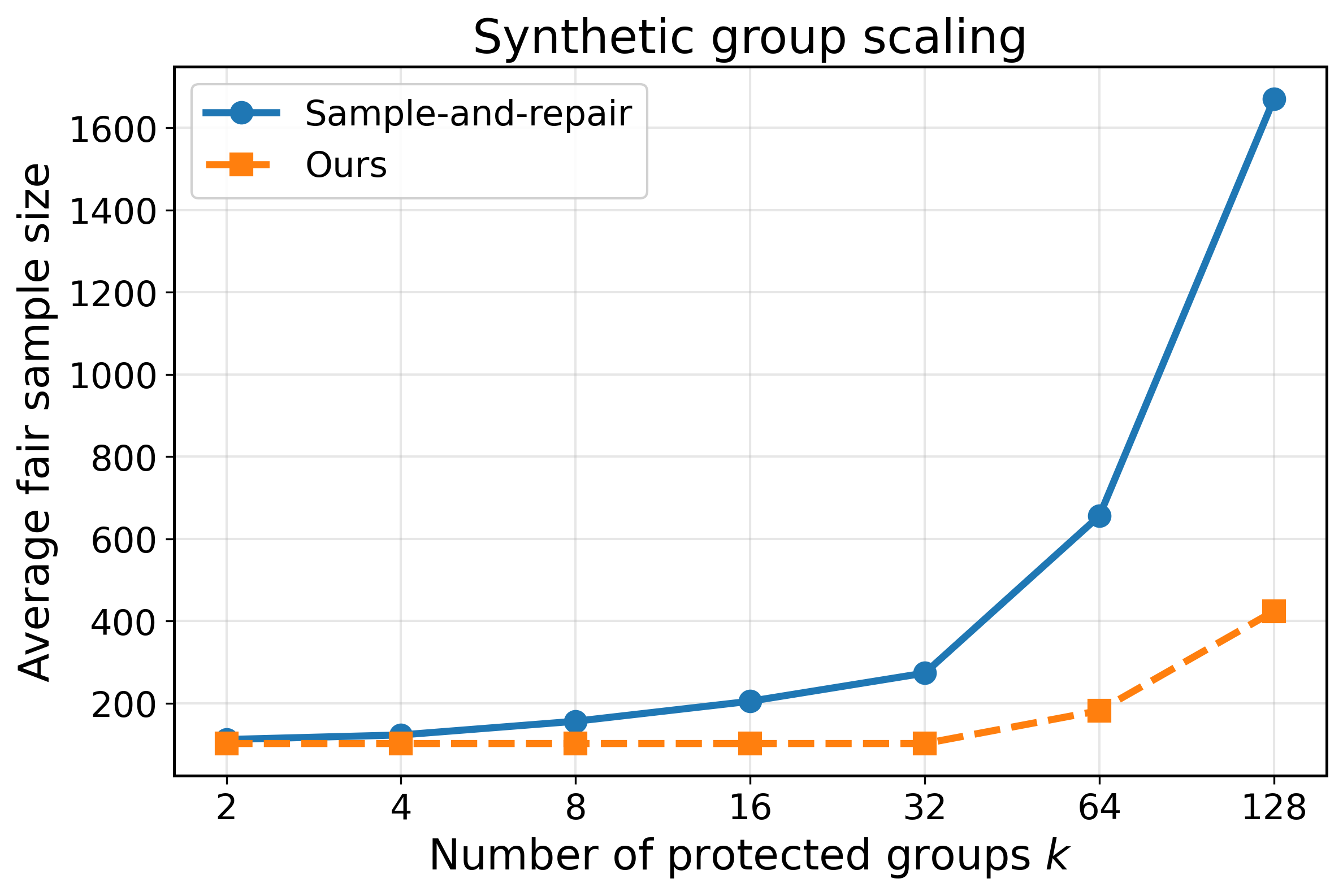}
	\caption{Valid summary size as the number of protected groups increases.}
	\label{fig:synthetic-k-scaling}
\end{figure}

Figure~\ref{fig:synthetic-k-scaling} shows the same trend as the real-data
intersectional experiment. As $k$ increases, the summary size of
sample-and-repair grows much faster than that of our method.

At $k=128$, the average summary size is $1670$ for sample-and-repair and
$424$ for our method.

The size of our summary also increases when $k$ becomes very large, because
every group with a positive target ratio must receive at least one point.
However, the increase is much smaller than that of sample-and-repair.

\subsubsection{Effect of the Distribution-Shift Parameter $\Gamma$}

For custom-ratio fairness, our theoretical bound depends on
\[
\Gamma=
\max_{c:\tau_c>0}\frac{\mu_c}{\tau_c}.
\]
We therefore study how the required summary size changes with $\Gamma$.

We consider
\[
\Gamma\in\{1,2,4,8\}
\]
while keeping $\varepsilon$ fixed.

\begin{figure}[htbp]
	\centering
	\includegraphics[width=\linewidth]
	{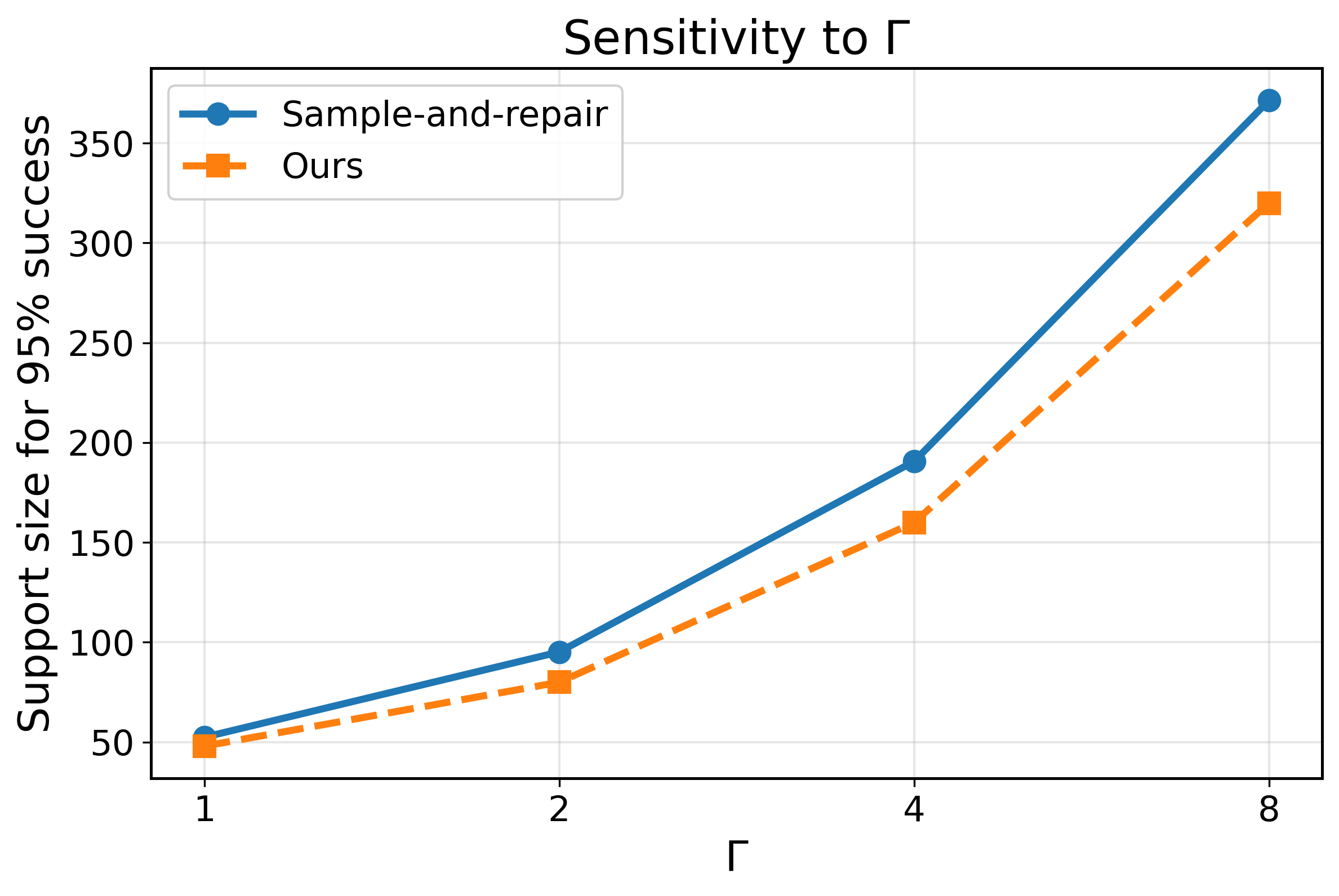}
	\caption{Valid summary size for different values of $\Gamma$.}
	\label{fig:gamma-sensitivity}
\end{figure}

At the $95\%$ empirical success threshold, the summary sizes of our method are
$48$, $80$, $160$, and $320$ for
$\Gamma=1,2,4,8$, respectively. The corresponding sizes for
sample-and-repair are $52$, $95$, $191$, and $371$.

The required summary size increases with $\Gamma$, which agrees with the
$\Gamma/\varepsilon$ dependence in our theoretical analysis.
Our method also uses fewer points than sample-and-repair for every tested value
of $\Gamma$.

\subsubsection{Effect of Group Overlap}

We also test whether the performance depends on the spatial distribution of
the protected groups. We fix
$n=50{,}000$, $k=16$, and $\varepsilon=0.1$, together with the group
proportions and target ratios. We then vary a separation parameter $\rho$
from $0$ to $1$.

When $\rho=0$, the groups have similar feature distributions. As $\rho$
increases, the groups become more separated. Thus, a larger $\rho$ means less
overlap between the groups.

\begin{figure}[htbp]
	\centering
	\includegraphics[width=\linewidth]
	{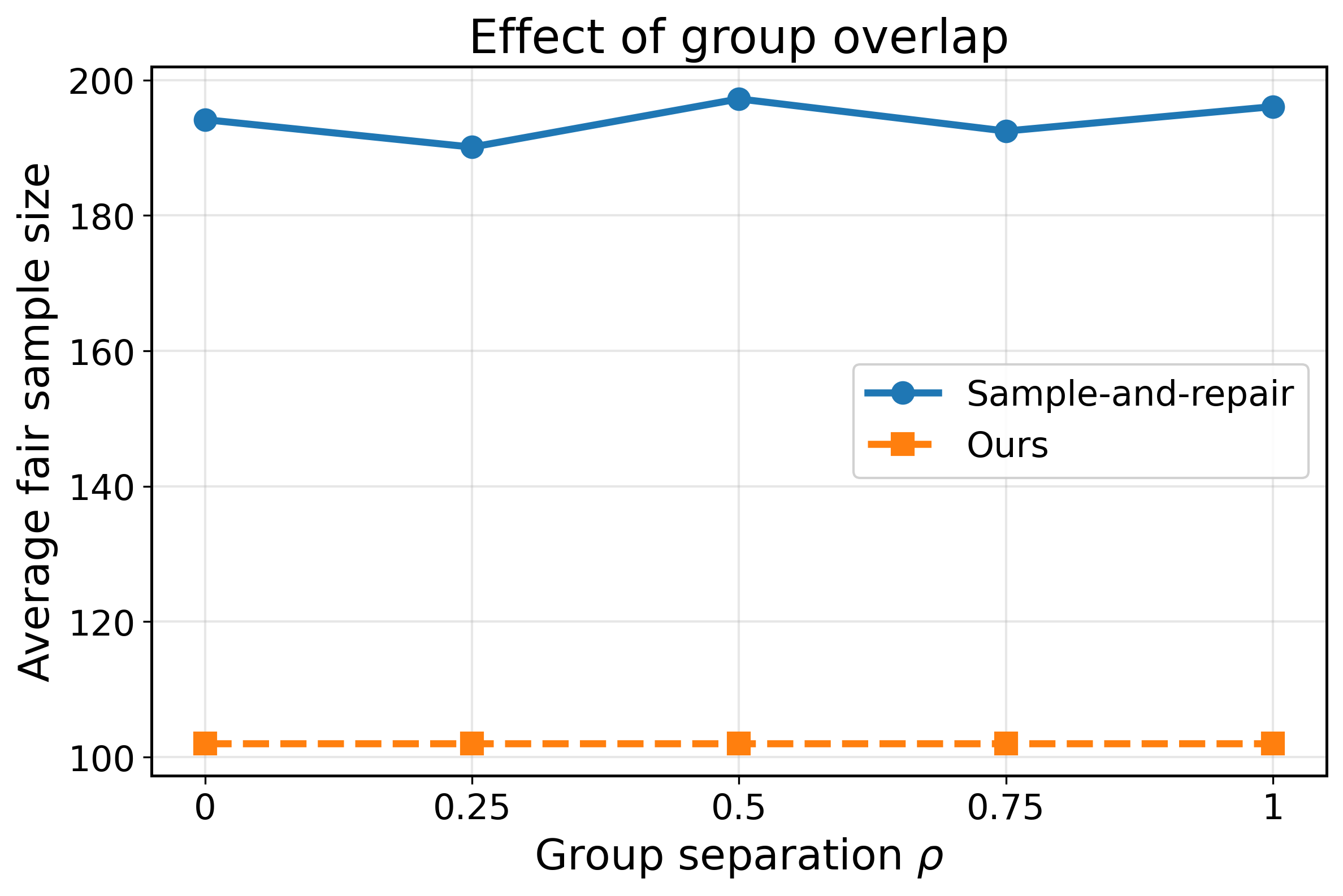}
	\caption{Valid summary size under different levels of group overlap.}
	\label{fig:distribution-overlap}
\end{figure}

Figure~\ref{fig:distribution-overlap} shows that our result is stable over the
tested values of $\rho$. Our summary size remains $102$, while
sample-and-repair uses between $190$ and $197$ points.
Both methods satisfy the tested fairness and $\varepsilon$-net conditions.

Thus, the advantage of our method is not limited to one particular spatial
distribution of the protected groups.

\subsection{Comparison with Fair Sketch-and-Merge}
\label{subsec:fsm-exp}

Sample-and-repair can be used in both DP and CR settings. To compare with
another method from the previous fair $\varepsilon$-net framework, we also
evaluate Fair Sketch-and-Merge (FSM)~\cite{dehghankar2025fair} under DP.

We use an implementation following the released FairNet code.
We fix
\[
n=65{,}536,\qquad
\varepsilon=0.1,\qquad
\lambda=102,
\]
and vary
\[
k\in\{2,4,8,16\}.
\]

For different values of $k$, we use the same data points, rectangles, and
heavy ranges, and only change the group labels. Every output is checked for
both fairness and range coverage. We use the same largest-remainder rule for
non-integral target counts.

\begin{figure}[htbp]
	\centering
	
	\includegraphics[width=0.48\linewidth]
	{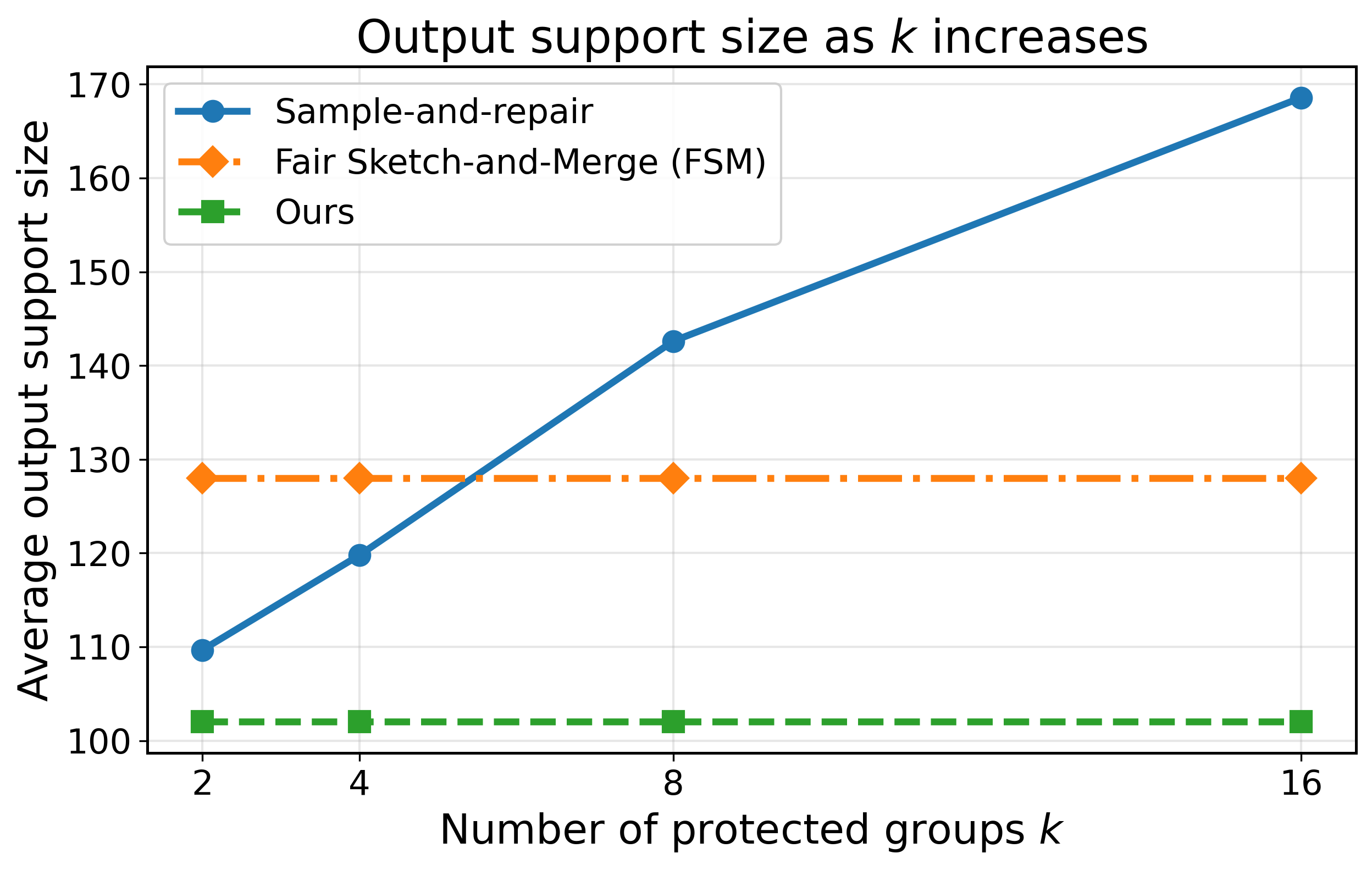}
	\hfill
	\includegraphics[width=0.48\linewidth]
	{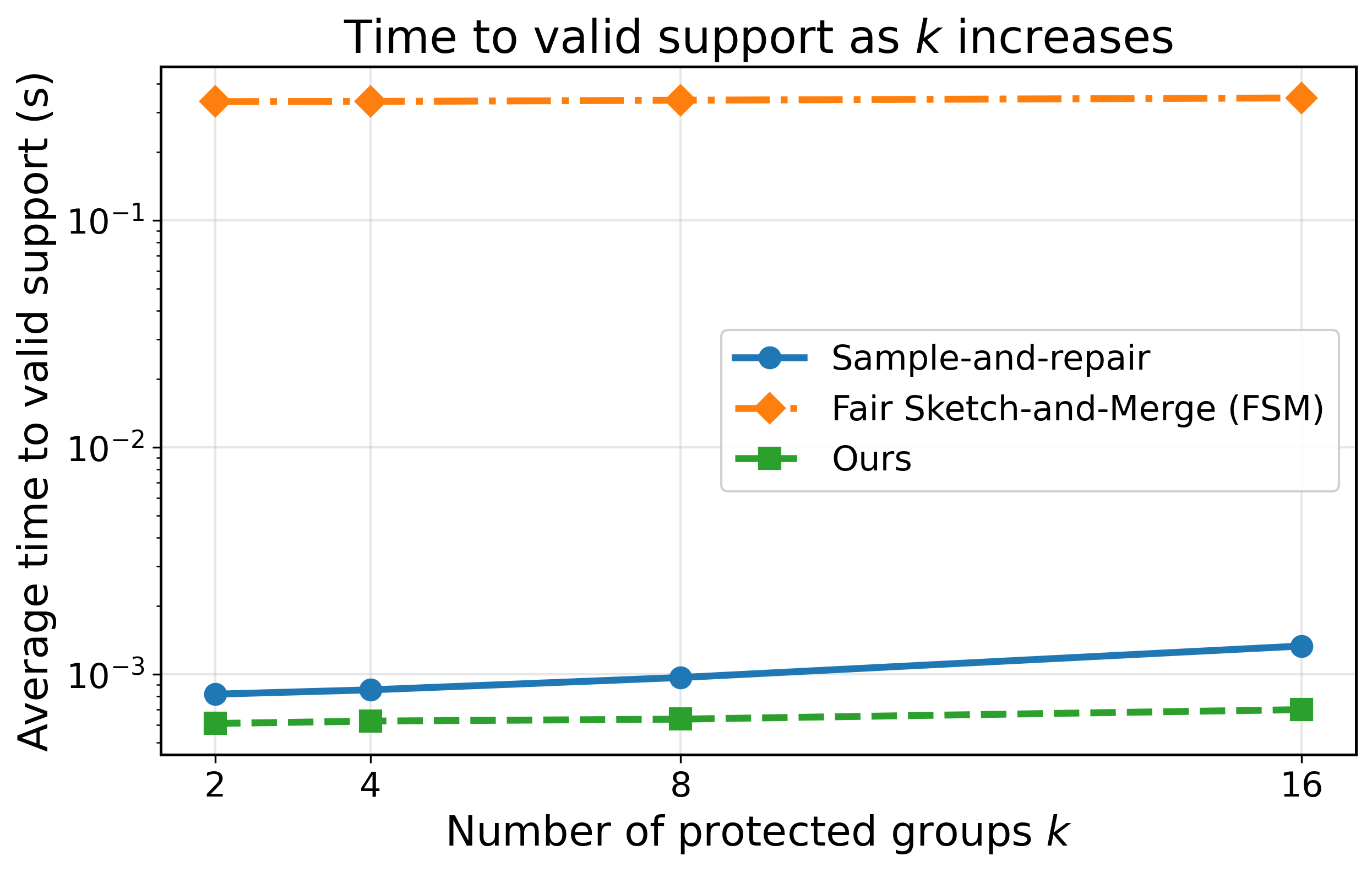}
	
	\caption{Comparison with Fair Sketch-and-Merge as the number of protected
		groups increases. Left: final summary size. Right: running time.}
	\label{fig:fsm-comparison}
\end{figure}

Figure~\ref{fig:fsm-comparison} compares the three methods.
Our method returns $102$ points for every tested value of $k$.
FSM returns $128$ points in all tested settings, while the summary size of
sample-and-repair increases from $110$ at $k=2$ to $169$ at $k=16$.

Our method therefore gives the smallest summary in these experiments.
FSM is also slower in our implementation because it uses fair matching and
repeated halving.

\subsection{Fair Geometric Hitting Set}

We next evaluate our rounding method for Fair Geometric Hitting Set.
For each instance, the two rounding methods start from the same optimal
fractional LP solution. They differ only in the rounding step.

\begin{figure}[htbp]
	\centering
	\includegraphics[width=\linewidth]
	{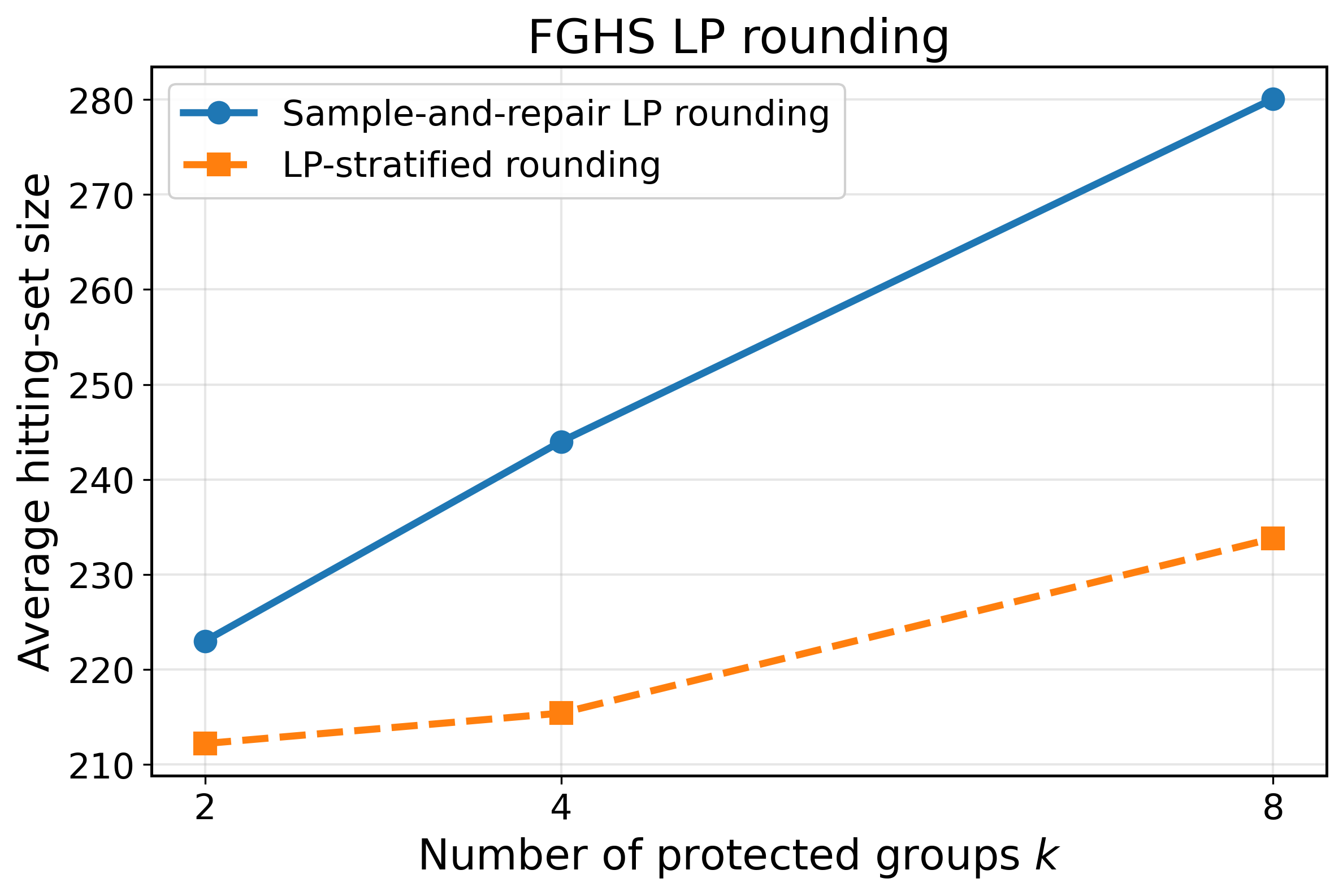}
	\caption{Final hitting-set size for the two FGHS rounding methods.}
	\label{fig:fghs-rounding}
\end{figure}

Both methods satisfy the required fairness constraints and hit all ranges in
the tested instances. Figure~\ref{fig:fghs-rounding} shows that our method
produces smaller hitting sets, and the difference becomes larger as the number
of protected groups increases.

At $k=8$, the average hitting-set size is $280$ for the baseline rounding
method and $234$ for our target-stratified rounding method. This is a
reduction of about $16.5\%$.

Since both methods use the same fractional LP solution, the difference comes
from the rounding step. The result shows that target-stratified rounding can
produce a smaller fair hitting set while keeping the same fairness and range
coverage requirements.

\subsection{Downstream Range-Query Application}

Finally, we study how fair $\varepsilon$-net summaries can be used to filter
range queries.

Let $S$ be an $\varepsilon$-net and let $R$ be a query range. By definition,
\[
S\cap R=\emptyset
\quad\Longrightarrow\quad
|R|<\varepsilon|X|.
\]
Thus, if $S$ is an $\varepsilon$-net for the query family, a query that does
not intersect $S$ has selectivity below $\varepsilon$. Such queries can be
filtered before their exact support is computed on the full dataset.

Queries that intersect the summary are kept for exact evaluation. A smaller
summary may therefore filter more low-selectivity queries while still
preserving sufficiently large ranges.

We test this use case on Adult and COMPAS with a separate set of range queries
that is not used during summary construction.

\begin{figure}[htbp]
	\centering
	\includegraphics[width=\linewidth]
	{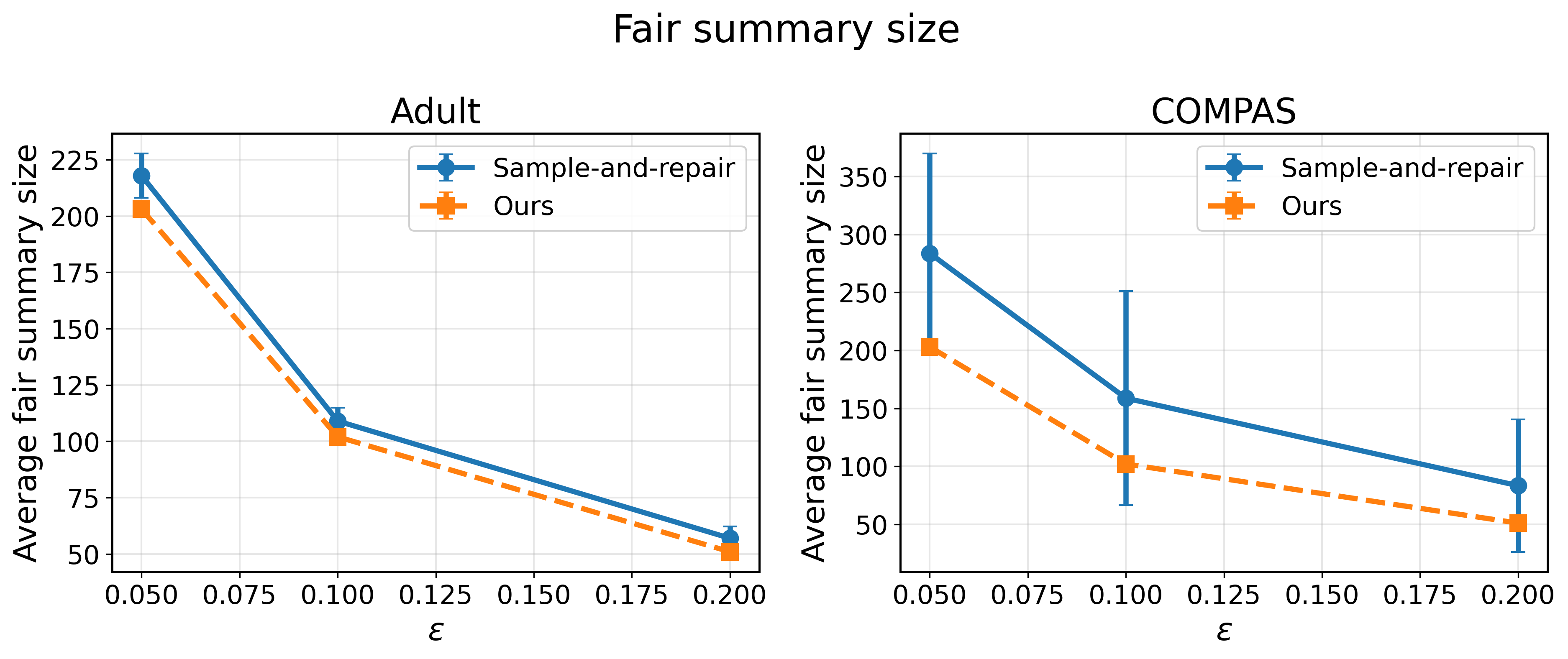}
	\caption{Fair summary size in the downstream range-query experiment.}
	\label{fig:query-summary-size}
\end{figure}

Figure~\ref{fig:query-summary-size} shows that our method produces a smaller
summary for every tested value of $\varepsilon$.
For example, on COMPAS with $\varepsilon=0.10$, the average summary size
decreases from $159$ for sample-and-repair to $102$ for our method, a
reduction of $35.7\%$.

\begin{figure}[htbp]
	\centering
	\includegraphics[width=\linewidth]
	{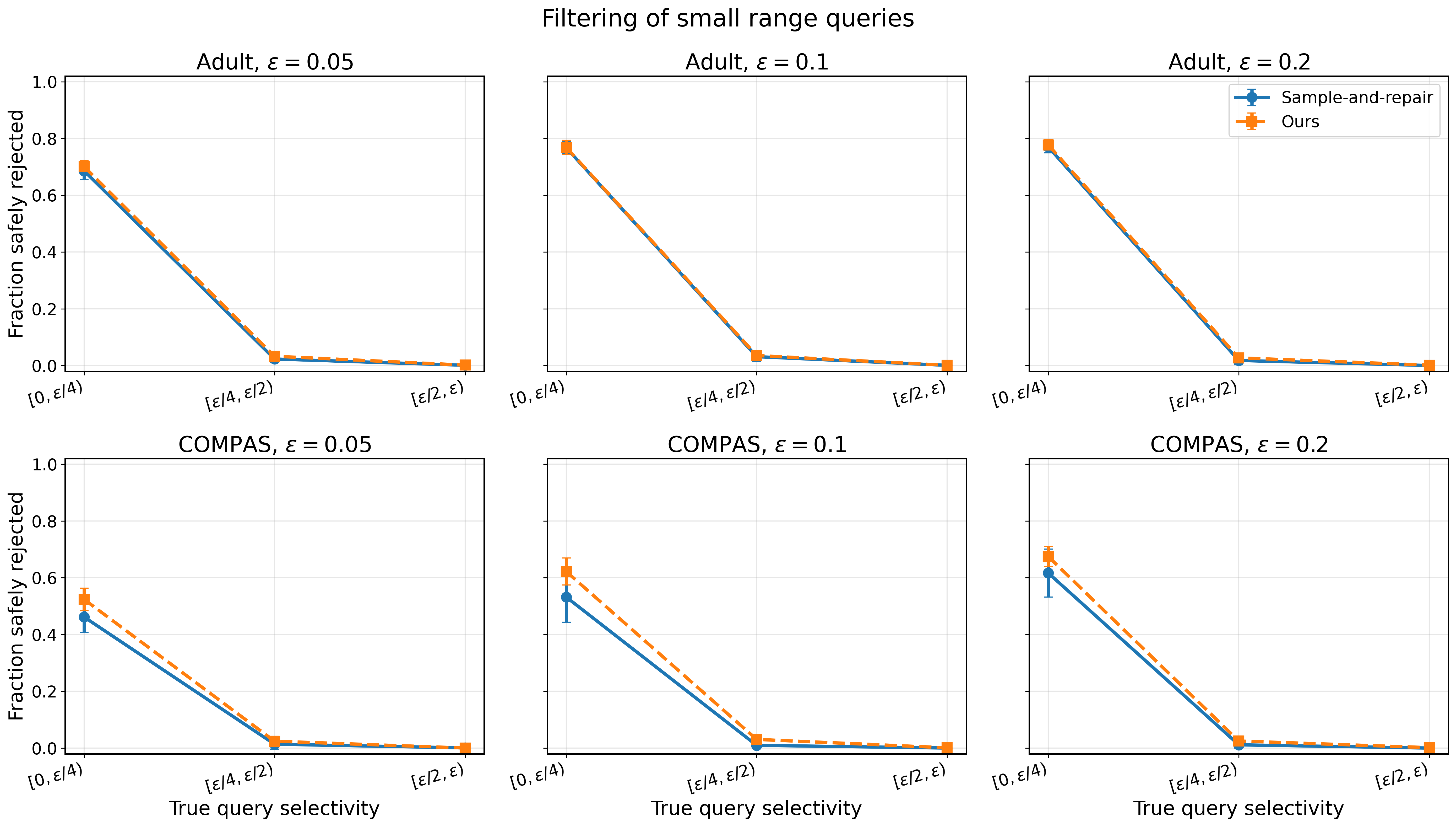}
	\caption{Fraction of low-selectivity queries filtered by the fair summaries.
		The three groups correspond to queries whose true selectivity lies in
		$[0,\varepsilon/4)$, $[\varepsilon/4,\varepsilon/2)$, and
		$[\varepsilon/2,\varepsilon)$, respectively.}
	\label{fig:safe-rejection}
\end{figure}

Figure~\ref{fig:safe-rejection} shows that queries with smaller true
selectivity are more likely to be filtered. Queries whose selectivity is close
to $\varepsilon$ are more likely to intersect the summary and are therefore
kept for exact evaluation.

On COMPAS with $\varepsilon=0.10$, among queries whose true selectivity is
below $\varepsilon/4$, sample-and-repair filters $53.2\%$ of the queries,
while our method filters $62.2\%$. This is an increase of $9.0$ percentage
points.

At the same time, across all tested settings, at least $99.99\%$ of the
queries whose true selectivity is at least $\varepsilon$ are retained.

These results show that the smaller summaries produced by our method can
filter more low-selectivity queries while preserving almost all queries above
the target threshold.

\section{Conclusion}
\label{sec:conclusion}

In this paper, we studied fair $\varepsilon$-nets and fair geometric hitting
sets as fairness-aware range summaries. Unlike the previous
sample-and-repair approach~\cite{dehghankar2025fair}, our method determines
the sample size for each color before sampling. The resulting
target-stratified sampling framework recovers the standard
$\varepsilon$-net guarantee under demographic parity, characterizes the
additional difficulty of custom-ratio fairness through the distribution-shift
parameter $\Gamma$, and improves the approximation guarantee for fair
geometric hitting sets.

Experiments on real and synthetic datasets show that our method constructs
smaller valid fair summaries than existing approaches, remains efficient on
large datasets and fine-grained group partitions, and exhibits the predicted
dependence on $\Gamma$. The experiments also demonstrate the benefit of our
rounding method for fair geometric hitting sets and show that the resulting
smaller fair summaries can improve downstream range-query filtering while
retaining almost all sufficiently large ranges.

Future work includes dynamic and streaming settings, other geometric range
families, and broader fairness constraints.
\section*{Acknowledgment}
This research is supported in part by National Natural Science Foundation of China
(grant number 12571342).

\bibliographystyle{IEEEtran}
\bibliography{reference2}

\end{document}